\newsavebox\CBox
\newcommand\hcancel[2][0.5pt]{%
  \ifmmode\sbox\CBox{$#2$}\else\sbox\CBox{#2}\fi%
  \makebox[0pt][l]{\usebox\CBox}%
  \rule[0.5\ht\CBox-#1/2]{\wd\CBox}{#1}}
\newcommand{\eat}[1]{}
\newcommand{\supp}{\mathsf{supp}}
\newcommand{\oracle}{\mathcal{O}}
\newcommand{\poly}{\mathrm{poly}}
\newcommand{\tF}{\widetilde{F}}
\newcommand{\q}{\Delta}
\newcommand{\calT}{\mathcal{T}}
\newcommand{\calL}{\mathcal{L}}
\newcommand{\calA}{\mathcal{A}}
\newcommand{\calR}{\mathcal{R}}
\newcommand{\sfL}{\mathsf{L}}
\newcommand{\N}{\mathbb{N}}
\newcommand{\Sg}{\mathsf{Seg}}
\newcommand{\hT}{\widehat{T}}
\newcommand{\hP}{\widehat{P}}
\newcommand{\T}{\mathcal{T}}
\newcommand{\K}{Z}
\newcommand{\hatPr}{\widehat{\Pr}}
\newcommand{\hK}{\widehat{Z}}
\newcommand{\Z}{\mathbb{Z}}
\newcommand{\R}{\mathbb{R}}
\newcommand{\I}{\mathbb{I}}
\newcommand{\eps}{\epsilon}
\newcommand{\strin}[1]{{#1}}
\newcommand{\done}[1]{{#1}}
\newtheorem{theorem}{Theorem}[section]
\newtheorem{lemma}[theorem]{Lemma}
\begin{document}

\setcounter{page}{1}

\title{
A Fully Polynomial-Time Approximation Scheme for \\ Approximating a Sum of Random Variables
\thanks{Institute for Interdisciplinary Information Sciences,
Tsinghua University, China.} %
}
\author{ 
Jian Li
\thanks{
Email: lijian83@mail.tsinghua.edu.cn
}
\quad\quad\quad\quad Tianlin Shi
\thanks{
Email: stl501@gmail.com
}
}
\date{}

\maketitle

\begin{abstract}
Given $n$ independent integer-valued random variables $X_1, X_2, ..., X_n$ and an integer $C$, we study
the fundamental problem of computing the probability that the sum $\bm{X}=X_1+X_2+...+X_n$ is at most $C$.
We assume that each random variable $X_i$ is implicitly given by an oracle $\mathcal{O}_i$, which given two input integers $n_1, n_2$
returns the probability of $n_1 \leq X_i \leq n_2$.
We give the first deterministic fully polynomial-time approximation scheme (FPTAS)
to estimate the probability up to a relative error of $1\pm \epsilon$.
Our algorithm is based on the technique for approximately counting knapsack solutions, developed in [Gopalan et al. FOCS11].
\end{abstract}

\begin{spacing}{1.1}

\section{Introduction}
We study the following fundamental problem.
The input consists of $n$ independent
(not necessarily identically distributed)
random integral variables $X_1,\ldots, X_n$ and an integer $C$.
Our task is to compute the following probability value 
\begin{equation}\label{eq:countprob}
F(C)=\Pr\,\Bigl[\,\sum_{i=1}^{n}{X_i} \leq C\,\Bigr]
\end{equation}

It is well known that computing $F(C)$ is \#P-hard (see e.g., \cite{kleinberg1997allocating}).
The hardness of computing $F(C)$ has an essential impact in the area of stochastic optimization
as many problems generalize and/or utilize this basic problem in one way or another,
thus inheriting the \#P-hardness.
Although we can sometimes use for example the linearity of expectation to bypass the difficulty of computing
$F(C)$, more than often no such simple trick is applicable,
especially in the context of risk-aware stochastic optimization
where people usually pay more attention to the tail probability than the expectation.

Despite the importance of the problem,
surprisingly, no approximation algorithm with provable multiplicative factor is known.
We note that we can easily obtain an additive PRAS (polynomial-time randomized approximation scheme)
for this problem via the Monte-Carlo method:
for each $i\in \{1,2,...,n\}$, generate $K$ independent samples $X_i^{(k)}, k = 1,2,...,K,$
according to the distribution of $X_i$, and then use the empirical average
\begin{equation*}
\tF(C) = \frac{1}{K}\sum_{k=1}^K{\I\,\Bigl(\sum_{i = 1}^n X_i^{(k)} \leq C\Bigr)}
\end{equation*}
as the estimation of $F(C)$, where $\I(\cdot)$ is the indicator function.
It is easy to see that $\tF(C)$ is an unbiased estimator of $F(C)$.
By standard Chernoff bound, one can see
that with $K=\poly(1/\epsilon)$ samples, the estimate is within an additive error $\epsilon$ from the true value
with constant probability (see e.g., \cite{mitzenmacher2005probability}).
To get a reasonable multiplicative approximation factor (say a constant close to 1), we need to set the additive error
 at the order of $F(C)$. So the number of samples needs to be
$\poly(1/F(C))$, which can be exponentially large, when $F(C)$ is exponentially small
\footnote{
In certain application domains such as risk analysis,
small probabilities (often associated with catastrophic losses) can be very important.
}.

\noindent {\bf Assumptions.}
Before presenting our main result, we need some notations and assumptions of the computation model.
We assume that all random variables are discrete and the support of $X_i$, denoted as $\supp_i$,
is finite and consists of only integers.
Without loss of generality, we can assume all $X_i$s are nonnegative (i.e., $\supp_i\subseteq \mathbb{N}$)
and $0\in \supp_i$ for all $i$.
To see why this is without loss of generality, simply consider the equivalent problem
of computing $\Pr[\sum_{i=1}^n (X_i-\min X_i) \leq C-\sum_{i=1}^n \min X_i  ]$,
where $\min X_i$ is the minimum value in $\supp_i$. Under such an assumption, the problem is non-trivial only for $C > 0$.
Moreover, we can assume that $\supp_i\in [0,C+1]$ for all $i$
since we can place all mass in $[C+1,\infty)$ at the single point $C+1$, which does not affect the answer.
The distribution of each random variable $X_i$ is implicitly given by an oracle
$\oracle_i$, which given two input value $(n_1,n_2)$
returns the value $\Pr[n_1\leq X_i\leq n_2]$ in constant time.

Our main result is a \emph{fully polynomial-time approximation scheme} (FPTAS) for computing $F(C)$.
For ease of notation, we use $(1\pm\epsilon)F(C)$ to denote the interval $[(1-\epsilon)F(C), (1+\epsilon)F(C)]$.
Let $\q=\prod_i \Pr[X_i=0]$.
Clearly $\q$ is a lower bound on the solution.
Recall that we say there is an FPTAS for the problem, if for any positive constant $\epsilon>0$,
there is an algorithm which can produce an estimate $\tF$ with
$\tF\in (1\pm\epsilon)F(C)$ in
$\poly(n, \epsilon^{-1}, \log C, \log \frac{1}{\q})$ time
\footnote{
More precisely, such a running time is called {\em weakly} polynomial time
since it is polynomial in the $\log$ of the numerical values of the instance
(or the number of bits required to encode those numerical values).
On the contrary, a strongly polynomial time for this problem would be a polynomial only in $n$ and $1/\epsilon$.
} (See e.g., \cite{mitzenmacher2005probability}).

\begin{theorem}
\label{th:main}
We are given $n$ independent nonnegative integer-valued random variables $X_1,\ldots, X_n$,
a positive integer $C$, and
a constant $\epsilon > 0$.
Suppose that for all $i \in \{1,2,...,n\}$, $\supp_i\subseteq [0,C+1]$, $0\in \supp_i$ and
there is an oracle $\oracle_i$, which, upon two input integers $(n_1,n_2)$,
returns the value $\Pr[n_1 \leq X_i \leq n_2]$ in constant time.
There is an FPTAS for estimating
$\Pr[\sum_{i=1}^n X_i\leq C]$ and the running time is
$O\bigl(\frac{n^3}{\epsilon^2} \log(\frac{1}{\q})^2 \log C\bigr)$.
\end{theorem}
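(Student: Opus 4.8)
The plan is to run the convolution dynamic program for $X_1+\dots+X_n$ while keeping the intermediate cumulative distribution functions only \emph{approximately}, by rounding probability values (rather than weights) onto a geometric grid --- this is the knapsack-counting technique of Gopalan et al., transported to CDFs. For $0\le i\le n$ write $S_i=\sum_{j\le i}X_j$ and $F_i(c)=\Pr[S_i\le c]$, so $F_0\equiv 1$ on $\{0,1,\dots\}$, the target quantity is $F_n(C)$, and
$$F_i(c)=\sum_{k\ge 0}\Pr[X_i=k]\,F_{i-1}(c-k),\qquad\text{with } F_{i-1}(c')=0 \text{ for } c'<0.$$
Because $0\in\supp_j$ and $\supp_j\subseteq[0,C{+}1]$, we have $F_i(0)=\prod_{j\le i}\Pr[X_j=0]\ge\q>0$, and only the restrictions of the $F_i$ to $\{0,\dots,C\}$ are ever needed. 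Each such $F_i$ is a nondecreasing step function, but it can have up to $C$ breakpoints; the point of the scheme is to keep instead a step function with only $\poly(n,1/\epsilon,\log\tfrac1\q)$ breakpoints.

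Fix $\delta=\epsilon/n$ and the value grid $V=\{(1-\delta)^0,(1-\delta)^1,\dots,(1-\delta)^J\}$, with $J$ the least integer such that $(1-\delta)^J\le(1-\delta)^n\q$; then $|V|=O(\tfrac1\delta\log\tfrac1\q)=O(\tfrac n\epsilon\log\tfrac1\q)$. The algorithm maintains a nondecreasing $V$-valued step function $\hat F_i:\{0,\dots,C\}\to V$ given by $\hat F_0\equiv 1$ and
$$\hat F_i(c)=\Big\lfloor\,\sum_{k\ge 0}\Pr[X_i=k]\,\hat F_{i-1}(c-k)\,\Big\rfloor_V,$$
where $\lfloor x\rfloor_V$ denotes the largest element of $V$ not exceeding $x$. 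Two facts make this tractable. First, if $\hat F_{i-1}$ is stored by its breakpoints $0=b_0<b_1<\dots<b_m\le C$ (with $m\le|V|$) and corresponding values $w_0<\dots<w_m$, then the inner sum collapses to $\sum_{\ell=0}^{m}w_\ell\cdot\Pr[\max(0,c-b_{\ell+1}{+}1)\le X_i\le c-b_\ell]$ (put $b_{m+1}=\infty$, and discard terms with $b_\ell>c$), so $\hat F_i(c)$ is obtained from $O(|V|)$ oracle calls. Second, since the inner sum is nondecreasing in $c$ and $\lfloor\cdot\rfloor_V$ is monotone, $\hat F_i$ is nondecreasing, so for each $v\in V$ one can binary-search over $c\in\{0,\dots,C\}$ for $\hat\tau_i(v):=\min\{c:\hat F_i(c)\ge v\}$, and the pairs $\{(\hat\tau_i(v),v):v\in V\}$ recover the breakpoint description of $\hat F_i$. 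This costs $O(|V|\log C)$ per $v$, hence $O(|V|^2\log C)$ to produce $\hat F_i$ from $\hat F_{i-1}$, and $O(n|V|^2\log C)=O(\tfrac{n^3}{\epsilon^2}\log^2\tfrac1\q\log C)$ in total. The algorithm returns $\hat F_n(C)$.

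Correctness rests on the multiplicative sandwich $(1-\delta)^i F_i(c)\le\hat F_i(c)\le F_i(c)$ for every $i$ and every $c\in\{0,\dots,C\}$, which I would establish by induction on $i$ (trivial for $i=0$). The upper bound follows because rounding down only decreases and, by induction, $\hat F_{i-1}\le F_{i-1}$, together with the convolution identity. For the lower bound, within the same induction one checks that the argument of $\lfloor\cdot\rfloor_V$ is always at least its value at $c=0$, namely $\Pr[X_i=0]\hat F_{i-1}(0)\ge(1-\delta)^{i-1}\prod_{j\le i}\Pr[X_j=0]\ge(1-\delta)^n\q\ge\min V$, so that $\lfloor x\rfloor_V\ge(1-\delta)x$; combining this with the inductive lower bound on $\hat F_{i-1}$ and the convolution identity pulls out one further factor of $(1-\delta)$. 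Setting $i=n$, $c=C$ and using $(1-\epsilon/n)^n\ge1-\epsilon$ gives $\hat F_n(C)\in[(1-\epsilon)F_n(C),F_n(C)]\subseteq(1\pm\epsilon)F_n(C)$, a genuine relative bound since $F_n(C)\ge\q>0$. The step needing the most care is making the error \emph{multiplicative}: it is precisely the choice to discretize the probability axis geometrically --- not the weight axis, which would give only additive control --- that lets the per-stage $(1-\delta)$ losses compound to a single $(1-\epsilon)$ factor over all $n$ stages. Once that is arranged, the oracle-based evaluation of the collapsed convolution, the binary search, and the breakpoint bookkeeping are routine, as are the side checks that $\lfloor\cdot\rfloor_V$ never underflows to $0$ (ensured by letting $V$ reach below $(1-\delta)^n\q$) and that all arithmetic stays within $\poly(n,\log\tfrac1\q,\log C,\log\tfrac1\epsilon)$ bits.
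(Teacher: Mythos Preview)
Your proposal is correct and is essentially the paper's approach, recast from the inverse-CDF side to the CDF side: your breakpoints $\hat\tau_i(v)$ are exactly the paper's table entries $T(i,j)$ (with $v=Q^{-j}$, $Q=(1-\delta)^{-1}$), your collapsed convolution $\sum_\ell w_\ell\Pr[X_i\in\cdot]$ is the paper's sum $\sum_m Q^{j-m}P_i(m)$ from Lemma~2.8, and your sandwich $(1-\delta)^iF_i\le\hat F_i\le F_i$ is the CDF reformulation of the paper's Lemma~2.3. The binary-search-plus-oracle structure and the resulting running time coincide; the only difference is that the paper phrases the recursion as a $\min_\lambda\max_x$ optimization on $\tau$, whereas you phrase it as ``convolve and round down to the geometric grid,'' which is arguably the more transparent of the two equivalent presentations.
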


\noindent
{\bf Remark 1.}
For simplicity of presentation, we assume in the above theorem a computation model in which any real arithmetic
can be performed with perfect accuracy in constant time and the probability values returned by the oracle
are reals, also with perfect accuracy. In Section~\ref{sec:bit}, we show how to implement our algorithm
in a computation model where only bit operations are allowed and the oracles also return numerical values
with finite precision. We show that the bit complexity of the algorithm is still a (somewhat larger) polynomial.

\vspace{0.2cm}
\noindent
{\bf Remark 2.}
Note that the oracle assumption is weaker than assuming the explicit representations of the distributions
(i.e., listing the probability mass at every point). In fact, if the input is the explicit representations of the distributions,
we can preprocess the input in linear time so that each oracle call to $\oracle_i$ can be simulated in $O(\log |\supp_i|)$ time.
This can be done by computing the prefix sums $\Pr[X_i\leq x]$ for all $x\in \supp_i$ in $O(|\supp_i|)$ time.
Then for each oracle call $(n_1,n_2)$, we use binary search to find out the smallest value $x_1 \in \supp_i$
that is no smaller than $n_1$ and
the largest value $x_2 \in \supp_i$
that is no larger than $n_2$ in $O(\log |\supp_i|)$ time.
Therefore, $\Pr[n_1\leq X_i\leq n_2]$ is the same as $\Pr[x_1\leq X_i \leq x_2]$, which can be computed from the prefix sums in constant time.

\subsection{Related Work}





There is a large body of work on estimating or upper/lower-bounding the distribution of the
sum of independent random variables. See e.g., \cite{bennett1962probability, petrov1965probabilities, lugannani1980saddle, daniels1987tail, mehta2007approximating}.
Those works are based on analytic numerical methods (e.g., Edgeworth expansion, saddle point method)
which either require specific families of distributions and/or do not provide any provable
multiplicative approximation guarantees.

Our problem is a generalization of the counting knapsack problem.
For the counting knapsack problem, Morris and Sinclair~\cite{morris2004random} obtained the first FPRAS (fully polynomial-time randomized approximation scheme)
based on the Markov Chain Monte-Carlo (MCMC) method.
Dyer~\cite{dyer2003approximate} provided a completely different FPRAS based on dynamic programming.
The first deterministic FPTAS is obtained by Gopalan et al.\cite{gopalan2011fptas}
(see also the journal version \cite{vstefankovic2012deterministic}).

Our problem is also closely related to the threshold probability maximization problem
(see a general formulation in \cite{li2013stoch}).
In this problem, we are given a ground set of items.
Each feasible solution to the problem
is a subset of the elements satisfying some property
(this includes problems such as shortest path,
minimum spanning tree, and minimum weight matching).
Each element $b$ is associated with a random weight $X_{b}$.
Our goal is to to find a feasible set $S$ such that $\Pr[\sum_{b\in S}X_b\leq C]$
is maximized, for a given threshold $C$.
There is a large body of literature on the threshold probability maximization problem,
especially for specific combinatorial problems and/or special distributions.
For example, Nikolova, Kelner, Brand and Mitzenmacher~\cite{nikolova2006stochastic} studied the corresponding
shortest path version for Gaussian, Poisson and exponential distributions.
Nikolova~\cite{nikolova2010approximation} extended this result to an FPTAS
for any problem with Gaussian distributions, if the deterministic version of the problem
has a polynomial-time algorithm.
The minimum spanning tree version with Gaussian distributed edges has also been studied in~\cite{geetha1993stochastic}.
For general discrete distributions, Li and Deshpande~\cite{li2011maximizing} obtained an additive PTAS
if the deterministic version of the problem can be solved exactly in pseudopolynomial time.
Very recently, Li and Yuan \cite{li2013stoch} further generalized this result to the class of problems
for which the multi-objective deterministic version admits a PTAS.

Our problem is also closely related to the fixed set version of the stochastic knapsack problem.
In this problem, we are given a knapsack of capacity $C$ and a set of items with random sizes and profits.
Their goal is to find a set of items with maximum total profit subject to the constraint that
the overflow probability is at most a given parameter $\gamma$.
Kleinberg, Rabani and Tardos \cite{kleinberg1997allocating} first considered the problem with Bernoulli-type distributions
and provided a polynomial-time $O(\log 1/\gamma)$-approximation.
Better results are known for specific distributions, such as exponentially distributions~\cite{goel1999stochastic}, Gaussian distributions~\cite{goyal2009chance,nikolova2010approximation}.
For general discrete distributions, bi-criteria additive PTASes
\footnote{
The overflow probability constraint may be violated by an additive factor $\epsilon$ for any constant $\epsilon>0$.
}
are known via different techniques \cite{bhalgat10,li2011maximizing,li2013stoch}.

\section{Algorithm}

Our algorithm is based on dynamic programming.
In Section~\ref{subsec:dp}, we provide
the recursion of the dynamic program, which is largely based on the idea developed in \cite{gopalan2011fptas,vstefankovic2012deterministic},
with some necessary adaptations.
However, since the support of each random variable can be exponentially large,
it is not immediately clear how the recursion can be implemented efficiently given the oracles.
We address this issue in Section~\ref{subsec:imp}.
In Section~\ref{sec:bit}, we analyze the bit complexity of our algorithm.

\subsection{The Dynamic Program}
\label{subsec:dp}
We first notice that
$\Pr[\sum_{j = 1}^{i}{X_j} \leq C]$, for any $i \in \{1,2,...,n\}$, is a nondecreasing function of $C$.
We consider its inverse function
$\tau(i,a): \{1,2,...,n\} \times \mathbb{R}_{\geq 0} \rightarrow \N \cup \{\pm \infty\}$,
which is defined to
$$
\tau(i,a)=\left\{
            \begin{array}{ll}
              \min\{C\mid C\geq 0 \text{ and }\Pr[\sum_{j = 1}^{i}{X_j} \leq C]\geq a\}, & \hbox{$0<a\leq 1$;} \\
              +\infty, & \hbox{$a>1$;} \\
              -\infty, & \hbox{$a=0$.}
            \end{array}
          \right.
$$
It is easy to see that $\tau(i,a)$ is nondecreasing in $a$.
The following simple lemma is needed. We omit the proof, which is straightforward.

\begin{lemma}
Both of the following statements hold true.
\begin{enumerate}
\item $\Pr\Bigl[\,\sum_{i=1}^{n}{X_i} \leq C\,\Bigr] = \max\{a : \tau(n,a) \leq C\}.$
\item $\tau(i,a)=0$ for any $i\in \{1,2,...,n\}$ and $a\leq \q$,
where $\q=\prod_i \Pr[X_i=0]$.
\end{enumerate}
\end{lemma}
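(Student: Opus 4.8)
The plan is to prove the two statements separately, each by unwinding the definition of $\tau$.

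For the first statement, the key claim I would establish is the equivalence: for every real $a\ge 0$, $\tau(n,a)\le C$ if and only if $a\le F(C)$, where $F(C)=\Pr[\sum_{i=1}^n X_i\le C]$ as in \eqref{eq:countprob}. Granting this, the set $\{a:\tau(n,a)\le C\}$ is exactly the closed interval $[0,F(C)]$, so its maximum is attained and equals $F(C)$, which is the assertion. To prove the equivalence I would split along the three regimes in the definition of $\tau$. If $a>1$, then $\tau(n,a)=+\infty$, which is not $\le C$ (a finite integer), and also $a>1\ge F(C)$, so both sides are false. If $a=0$, then $\tau(n,a)=-\infty\le C$ and $0\le F(C)$, so both sides are true. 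For $0<a\le 1$: first note the set $\{C'\ge 0:\Pr[\sum_j X_j\le C']\ge a\}$ is nonempty (the CDF of $\sum_j X_j$ reaches $1$ at some finite point because the supports are finite), so the $\min$ defining $\tau(n,a)$ is well-defined. Now if $a\le F(C)$, then $C\ge 0$ lies in that set, hence $\tau(n,a)\le C$; conversely, writing $C^\star=\tau(n,a)$, the defining property of the $\min$ gives $\Pr[\sum_j X_j\le C^\star]\ge a$, and since $\Pr[\sum_j X_j\le\cdot]$ is nondecreasing and $C^\star\le C$, we get $F(C)\ge a$.

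For the second statement, I would use nonnegativity of the $X_j$: $\Pr[\sum_{j=1}^i X_j\le 0]=\Pr[X_1=\cdots=X_i=0]=\prod_{j=1}^i\Pr[X_j=0]$ by independence. Since each factor is at most $1$, this product is at least $\prod_{j=1}^n\Pr[X_j=0]=\q$. Hence for any $0<a\le\q$, the value $C'=0$ already satisfies $\Pr[\sum_{j=1}^i X_j\le C']\ge\q\ge a$, and since $0$ is the smallest admissible choice of $C'$ in the definition of $\tau(i,a)$, we conclude $\tau(i,a)=0$. (The degenerate case $a=0$, where $\tau(i,a)=-\infty$ by convention, is read as excluded; one applies the statement for $0<a\le\q$, and in the regime of interest $\q>0$.)

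I do not expect a serious obstacle. The only points requiring care are the boundary behaviors of $\tau$ — making sure the symbols $\pm\infty$ interact correctly with ``$\le C$'', that the $\min$ in the definition is over a nonempty set when $0<a\le 1$, and that the $\max$ in statement~(1) is genuinely attained, which is immediate once the relevant set is identified as $[0,F(C)]$.
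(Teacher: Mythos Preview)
Your proof is correct and is exactly the direct unwinding of the definition one would expect; the paper itself omits the proof as ``straightforward,'' so there is nothing to compare against. Your remark that the case $a=0$ (where $\tau(i,0)=-\infty$ by convention) should be read as excluded from statement~(2) is a valid observation about a harmless imprecision in the lemma's wording.
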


The following recursion is very important to us.
For $x\in \supp_i$, we use $p_i(x)$ as a shorthand notation for $\Pr[X_i = x]$.
\begin{lemma} \label{lm:recursion}
Let $\gamma_i : \supp_i \rightarrow \mathbb{R}_{\geq 0}$ denote a function.
The following recurrence holds:
\begin{equation}
\label{eq:rec1}
\tau(i,a) = \min_{\gamma_i} \max_{x \in \supp_i}\{\tau(i-1, \gamma_i(x))+x\}, \text{ for any } i\in \{1,2,...,n\}
\text{ and } a\geq 0,
\end{equation}
where the minimum is taken among all functions $\gamma_i$ such that
\begin{equation} \label{eq:subject}
\sum_{x \in \supp_i}{\gamma_i(x) p_i(x)} \geq a.
\end{equation}

\end{lemma}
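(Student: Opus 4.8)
The plan is to prove the two inequalities $\tau(i,a) \le \min_{\gamma_i}\max_x\{\tau(i-1,\gamma_i(x))+x\}$ and $\tau(i,a) \ge \min_{\gamma_i}\max_x\{\tau(i-1,\gamma_i(x))+x\}$ separately, with the intuition that a feasible function $\gamma_i$ records, for each possible value $x$ of $X_i$, the ``residual probability budget'' $\gamma_i(x)$ that the partial sum $\sum_{j=1}^{i-1}X_j$ must achieve so that, after adding $x$, the cumulative probability is at least $a$. First I would fix $i$ and $a$, and dispose of the degenerate cases ($a=0$, $a>1$, or $a\le\q$) using the preceding lemma, so that we may assume $0<a\le 1$.

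For the $\le$ direction, let $\gamma_i$ be any function satisfying \eqref{eq:subject}, and set $C^\star=\max_{x\in\supp_i}\{\tau(i-1,\gamma_i(x))+x\}$. I would show $\Pr[\sum_{j=1}^i X_j\le C^\star]\ge a$, which by definition of $\tau$ gives $\tau(i,a)\le C^\star$; taking the minimum over all feasible $\gamma_i$ finishes this direction. The key computation is the conditioning identity
\begin{equation*}
\Pr\Bigl[\sum_{j=1}^i X_j\le C^\star\Bigr]=\sum_{x\in\supp_i}p_i(x)\,\Pr\Bigl[\sum_{j=1}^{i-1}X_j\le C^\star-x\Bigr].
\end{equation*}
For each $x$ we have $C^\star-x\ge \tau(i-1,\gamma_i(x))$, and since $\Pr[\sum_{j=1}^{i-1}X_j\le \cdot]$ is nondecreasing while $\tau(i-1,\cdot)$ is (essentially) its inverse, $\Pr[\sum_{j=1}^{i-1}X_j\le \tau(i-1,\gamma_i(x))]\ge\gamma_i(x)$. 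Plugging this in bounds the sum below by $\sum_x p_i(x)\gamma_i(x)\ge a$ using \eqref{eq:subject}.

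For the $\ge$ direction, let $C=\tau(i,a)$, so $\Pr[\sum_{j=1}^i X_j\le C]\ge a$. I would construct an explicit feasible $\gamma_i$ that witnesses $\max_x\{\tau(i-1,\gamma_i(x))+x\}\le C$: namely set $\gamma_i(x)=\Pr[\sum_{j=1}^{i-1}X_j\le C-x]$ for $x\in\supp_i$ with $x\le C$, and $\gamma_i(x)=0$ otherwise. Then $\tau(i-1,\gamma_i(x))\le C-x$ for every relevant $x$ by the definition of $\tau$ as the minimal threshold achieving probability $\gamma_i(x)$, giving the max bound; and the conditioning identity above shows $\sum_x p_i(x)\gamma_i(x)=\Pr[\sum_{j=1}^i X_j\le C]\ge a$, so \eqref{eq:subject} holds. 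This exhibits one feasible $\gamma_i$ attaining value $\le C$, hence the minimum is $\le C=\tau(i,a)$.

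The main subtlety to handle carefully is the interplay between $\tau(i-1,\cdot)$ and the CDF of the partial sum when the inequality in the definition of $\tau$ is attained only approximately or at boundary values of $a$ (e.g.\ $\gamma_i(x)$ equal to $0$, or slightly above $1$ due to rounding in later sections, though here the functions are exact); I would make the ``$\tau$ is a pseudo-inverse'' statements precise, i.e.\ $\Pr[\sum X_j\le \tau(i-1,b)]\ge b$ for $b\in(0,1]$ and $\tau(i-1,b)\le C'$ whenever $\Pr[\sum X_j\le C']\ge b$, and invoke these two facts cleanly in both directions. Everything else is bookkeeping with the law of total probability over the value of $X_i$.
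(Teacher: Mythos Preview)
Your proposal is correct and follows essentially the same approach as the paper: both directions use the law of total probability over the value of $X_i$, with the $\le$ direction showing every feasible $\gamma_i$ yields a value at least $\tau(i,a)$, and the $\ge$ direction exhibiting the explicit witness $\gamma_i^\star(x)=\Pr\bigl[\sum_{j=1}^{i-1}X_j\le \tau(i,a)-x\bigr]$. The paper's treatment of the boundary cases $a=0$ and $a>1$ is also the same as what you outline.
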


\noindent
Intuitively, $\gamma_i(x)$ represents the value $\Pr[\sum_{j=1}^{i-1}X_j\leq \tau(i,a)-x]$.
By enforcing \eqref{eq:subject}, we make sure that
$\Pr[\sum\limits_{j=1}^{i}{X_j} \leq \tau(i,a)] \geq a$ so that the definition of $\tau(i,a)$ is met.
The formal proof is as follows.

\begin{proof}
We first fix some $0< a\leq 1$ and prove that for any $\gamma_i: \supp_i \rightarrow [0,1]$,
the quantity
$C' := \max_{x\in\supp_i}\{\tau(i-1, \gamma_i(x))+x\} \geq \tau(i,a)$.
Due to the independence, we can see that
\begin{equation} \label{eq:independence}
\Pr\Bigl[\,\sum_{j = 1}^{i}{X_j} \leq C'\,\bigr] =
\sum_{x \in \supp_i} p_i(x) \cdot \Pr\Bigl[\,\sum_{j=1}^{i-1}{X_{j}} \leq C'-x\,\Bigr] .
\end{equation}
By the definition of $C'$,
we know that $C'-x \geq \tau(i-1,\gamma_i(x))$ for all $x\in \supp_i$,
and hence $\Pr[\sum_{j=1}^{i-1}{X_{j}} \leq C'-x] \geq \gamma_i(x)$. This gives
$
\Pr\Bigl[\,\sum_{j = 1}^{i}{X_j} \leq C'\,\Bigr] \geq \sum_{x \in \supp_i}\gamma_i(x) p_i(x) \geq a.
$
Therefore, we have $C' \geq \tau(i,a)$.
If $a > 1$, by (\ref{eq:subject}) we know that $\gamma_i(x') > 1$ for some $x' \in \supp_i$ and therefore $C' \geq \tau(i-1, \gamma_i(x')) = \infty$.
If $a=0$, $\tau(i,a)$ is defined to be $-\infty$ and we trivially have $C'\geq \tau(i,a)$.
In all above cases, we have $C' \geq \tau(i,a)$.

Now, we prove the reverse direction.
In particular, we prove that there exists a choice of $\gamma_i := \gamma_i^\star$
such that $C^\star := \max_{x \in \supp_i}\{\tau(i-1, \gamma_i^\star(x))+x\}\leq \tau(i,a)$.
First assume that $0< a\leq 1$.
Let
$
\gamma_i^\star(x) = \Pr\Bigl[\,\sum_{j=1}^{i-1}{X_j} \leq \tau(i,a)-x\,\Bigr].
$
It is easy to see that $\gamma_i^\star$ satisfies \eqref{eq:subject}, if we let $C' = \tau(i,a)$ in (\ref{eq:independence}) and use the definition of $\tau(i,a)$ on the LHS.
Therefore, $\tau(i-1, \gamma_i^\star(x)) \leq \tau(i,a)-x$ for all $x\in \supp_i$,
which leads to $\max_{x \in \supp_i}\{\tau(i-1, \gamma_i^\star(x))+x\} \leq \tau(i, a)$.
If $a=0$, let $\gamma_i^\star(x)=0$ for all $x\in \supp_i$. It is easy to see that
$C^\star\leq \tau(i,a)$ since $C^\star=-\infty$ by the definition of $C^\star$.
If $a> 1$, $\tau(i,a)=+\infty$ and $C^\star\leq \tau(i,a)$ is also trivially true.
This completes the proof.
\end{proof}

For ease of notation,
we define function $\lambda_i: \supp_i \rightarrow \mathbb{R}_{\geq 0}$ as $\lambda_i(x) = \gamma_i(x) p_i(x)/a$.
Hence, the recursion in Lemma \ref{lm:recursion} can be rewritten as:
\begin{equation} \label{eq:DP}
\tau(i,a) = \min_{\lambda_i} \max_{x \in \supp_i}\{\tau(i-1, \frac{\lambda_i(x)}{p_i(x)} a)+x\}
\quad\text{ subject to }\quad
\sum_{x \in \supp_i}{\lambda_i(x)} \geq 1.
\end{equation}

Since the second argument $a$ is a continuous variable, it is not clear how recursion~\eqref{eq:DP} can be efficiently evaluated.
To overcome this issue, we discretize the above recursion as follows:
Let
$$Q = 1+\frac{\epsilon}{n}.$$
Note that $Q$ is slightly larger than $1$.
Recall that $\q=\prod_{i} \Pr[X_i = 0]$, which is a trivial lower bound on $F(C)$.
Let $s = \lceil\log_Q{\frac{1}{\q}}\rceil = O(\frac{n}{\epsilon} \log {\frac{1}{\q}})$.

We define recursively the function $T: \{1,2,...,n\} \times \{0,1,2,...,s\} \rightarrow \Z \cup \{\pm \infty\}$ as follows:
The base case is defined as
\begin{equation}
T(1,j) = \tau(1,Q^{-j}), \text{ for all } j \in [s].
\end{equation}
Note that each $\tau(1,Q^{-j})$ can be computed with $O(\log C)$ calls to $\oracle_1$.
For $i \geq 2$ and $j \in [s]$, define
\begin{equation}
\label{eq: T-recurrence}
T(i,j) = \min_{\lambda_i}\max_{x \in \supp_i} \Bigl\{ T\Bigl(i-1, j+\lfloor\log_{Q}\frac{p_i(x)}{\lambda_i(x)}\rfloor\Bigr)+x \Bigr\}
\quad\text{ subject to }\quad
\sum_{x \in \supp_i}{\lambda_i(x)} \geq 1.
\end{equation}
To make analysis easier (mainly to avoid tedious case-by-case study near the boundary), we extend the domain of $T(i,j)$ to $\{1,2,...,n\} \times \Z \cup \{+\infty\}$ such that for all $i\in \{1,2,...,n\}$, we define $T(i,j)=0$ for $j>s$ and $T(i,j) = +\infty$ for $j < 0$.
Furthermore, we define $T(i, +\infty) = -\infty$. So if $\lambda_i(x)=0$, $T\Bigl(i-1, j+\lfloor\log_{Q}(\frac{p_i(x)}{\lambda_i(x)})\rfloor\Bigr)=T(i-1,+\infty)=-\infty$.
Comparing \eqref{eq: T-recurrence} and \eqref{eq:DP}, the similarity suggests
that $T(i,j)$ is an approximate version of $\tau(i, Q^{-j})$.
The next lemma formalizes this idea.

\eat{
\begin{algorithm}[t]
\caption{The Dynamic Program for Computing $\Pr[\sum_{i=1}^n X_i\leq C]$}
\label{algo1}
\begin{algorithmic}[1]
%
\STATE Set $Q = 1+\frac{\ln(1+\epsilon)}{n+1}$ and $s = \lceil\log_{Q}{\frac{1}{q}}\rceil$.
\STATE Query oracle $\mathcal{O}_1$ to obtain $T(1, j) = \tau(1, Q^{-j})$ for all $j \in [s]$.
\FOR{ $i = 2 \to n$}
\FOR{ $j = 0 \to s$}
\STATE Set $T(i,j) = \min_{\lambda_i}\max_{x \in \supp_i}{T(i-1, j+\lfloor\log_{Q}(\frac{p_i(x)}{\lambda_i(x)})\rfloor)+x}$.
\ENDFOR
\ENDFOR
\STATE Let $j^\star = \min\{j: T[n,j] \leq C\}$. Output $\tF(C) := Q^{-j^\star+1}$.
\end{algorithmic}
\end{algorithm}
}

\begin{algorithm}[t]
\caption{The Dynamic Program for Computing (\ref{eq:countprob})}
\label{algo1}

Let $Q = 1+\epsilon/n$ and $s = \lceil\log_{Q}{\frac{1}{\q}}\rceil$\;
Query oracle $\mathcal{O}_1$ to obtain $T(1, j) = \tau(1, Q^{-j})$ for all $j \in [s]$\;
\For{ $i = 2 \to n$}{
\For{ $j = 0 \to s$}{
    $T(i,j) \leftarrow \min_{\lambda_i}\max_{x \in \supp_i}{T(i-1, j+\lfloor\log_{Q}\frac{p_i(x)}{\lambda_i(x)}\rfloor)+x,}~~$  subject to $\sum_{x \in \supp_i}{\lambda_i(x)} \geq 1.$
    }}
$j^\star \leftarrow \min\{j: T(n,j) \leq C\}$. {\bf Return} $\tF(C) := Q^{-j^\star+1}$\;
\end{algorithm}

\begin{lemma} \label{lemma: approx}
For all $i \in \{1,2,..., n\}$ and $j \in \Z \cup \{\infty \}$, we have that
\begin{equation*}
\tau(i, Q^{-j}) \leq T(i,j) \leq \tau(i, Q^{-(j-i)}).
\end{equation*}
\end{lemma}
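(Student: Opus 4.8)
The plan is to prove both inequalities simultaneously by induction on $i$. The claim for second coordinates outside $\{0,1,\dots,s\}$ is a boundary check that works the same for every $i$: when $j<0$ we have $T(i,j)=+\infty=\tau(i,Q^{-j})$ (the latter because $Q^{-j}>1$) and also $\tau(i,Q^{-(j-i)})=+\infty$; when $j>s$ we have $T(i,j)=0$ while $Q^{-j}\le Q^{-s}\le\q$ forces $\tau(i,Q^{-j})=0$ by the earlier lemma on $\tau$, and $\tau(i,Q^{-(j-i)})\ge 0$ trivially; when $j=\infty$ all three quantities are $-\infty$. For the base case $i=1$ and $j\in\{0,\dots,s\}$ we have $T(1,j)=\tau(1,Q^{-j})$ by definition, so the lower bound holds with equality and the upper bound follows from $Q^{-(j-1)}\ge Q^{-j}$ together with the monotonicity of $\tau$ in its second argument. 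So fix $2\le i\le n$ and $0\le j\le s$, and assume the statement holds for $i-1$ at every second coordinate in $\Z\cup\{\infty\}$.

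\emph{Lower bound.} Take any admissible $\lambda_i$ (i.e.\ $\lambda_i\ge 0$, $\sum_x\lambda_i(x)\ge1$) and write $m(x)=j+\lfloor\log_Q\frac{p_i(x)}{\lambda_i(x)}\rfloor$. The induction hypothesis gives $T(i-1,m(x))\ge\tau(i-1,Q^{-m(x)})$. Since $\lfloor z\rfloor\le z$ and $Q>1$, $Q^{-m(x)}=Q^{-j}\,Q^{-\lfloor\log_Q(p_i(x)/\lambda_i(x))\rfloor}\ge Q^{-j}\cdot\frac{\lambda_i(x)}{p_i(x)}$, so monotonicity of $\tau$ yields $T(i-1,m(x))+x\ge\tau\bigl(i-1,\tfrac{\lambda_i(x)}{p_i(x)}Q^{-j}\bigr)+x$. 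Taking the maximum over $x\in\supp_i$ and observing that $\lambda_i$ is an admissible choice in recursion~\eqref{eq:DP} with $a=Q^{-j}$, we get $\max_x\{T(i-1,m(x))+x\}\ge\tau(i,Q^{-j})$; since this holds for every admissible $\lambda_i$, taking the minimum gives $T(i,j)\ge\tau(i,Q^{-j})$.

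\emph{Upper bound.} We may assume $j\ge i$, since otherwise $Q^{-(j-i)}>1$ and the right side is $+\infty$; put $a=Q^{-(j-i)}\in(0,1]$. Reuse the canonical witness from the proof of Lemma~\ref{lm:recursion}: $\gamma_i^\star(x)=\Pr[\sum_{j'=1}^{i-1}X_{j'}\le\tau(i,a)-x]$ and $\lambda_i^\star(x)=\gamma_i^\star(x)p_i(x)/a$; that proof already gives $\sum_x\lambda_i^\star(x)\ge1$ (admissibility in~\eqref{eq: T-recurrence}) and $\tau(i-1,\gamma_i^\star(x))\le\tau(i,a)-x$. Because $\frac{p_i(x)}{\lambda_i^\star(x)}=\frac{a}{\gamma_i^\star(x)}$ and $\log_Q a=i-j\in\Z$, the index appearing in~\eqref{eq: T-recurrence} simplifies to $k(x):=j+\lfloor\log_Q\frac{p_i(x)}{\lambda_i^\star(x)}\rfloor=i+\lfloor\log_Q\frac{1}{\gamma_i^\star(x)}\rfloor$, whence $k(x)-(i-1)=1+\lfloor\log_Q\frac{1}{\gamma_i^\star(x)}\rfloor$ and, using $\lfloor z\rfloor>z-1$,
\[
Q^{-(k(x)-(i-1))}=Q^{-1}\,Q^{-\lfloor\log_Q(1/\gamma_i^\star(x))\rfloor}\le Q^{-1}\cdot Q\cdot\gamma_i^\star(x)=\gamma_i^\star(x).
\]
The induction hypothesis and monotonicity of $\tau$ then give $T(i-1,k(x))\le\tau\bigl(i-1,Q^{-(k(x)-(i-1))}\bigr)\le\tau(i-1,\gamma_i^\star(x))\le\tau(i,a)-x$, i.e.\ $T(i-1,k(x))+x\le\tau(i,Q^{-(j-i)})$ for every $x\in\supp_i$; taking the maximum over $x$ and using admissibility of $\lambda_i^\star$ gives $T(i,j)\le\tau(i,Q^{-(j-i)})$.

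The step I expect to be most delicate is the index bookkeeping in the upper bound: the ``$-i$'' on the right-hand side is exactly calibrated so that the one unit of floor-induced slack ($Q^{-\lfloor z\rfloor}\le Q\cdot Q^{-z}$) introduced when descending from level $i$ to level $i-1$ is cancelled by the change of the shift from ``$-i$'' to ``$-(i-1)$'', so the slack never accumulates beyond one unit per level. Keeping the signs of the exponents and the directions of the two floor inequalities straight, and checking that every link of both chains degrades to the correct value in the degenerate cases---$\lambda_i(x)=0$ producing an infinite index and a $-\infty$ entry, an out-of-range index producing $0$ or $+\infty$, or $\tau(i,a)$ itself being $0$ or $+\infty$---is where the care lies, and is precisely why $T$ was given its extended domain with $T(i,j)=0$ for $j>s$, $T(i,j)=+\infty$ for $j<0$, and $T(i,+\infty)=-\infty$. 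Everything else (monotonicity of $\tau$, the two properties of $\gamma_i^\star$, admissibility in~\eqref{eq:DP}) is already available from the preceding material.
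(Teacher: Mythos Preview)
Your proof is correct and follows essentially the same approach as the paper: induction on $i$, combined with the two floor estimates $Q^{-\lfloor z\rfloor}\ge Q^{-z}$ and $Q^{-\lfloor z\rfloor}\le Q\cdot Q^{-z}$ to sandwich $T(i-1,\cdot)$ between two values of $\tau(i-1,\cdot)$. The one stylistic difference is in the upper bound: the paper keeps $\lambda_i$ generic, proves the pointwise inequality $T(i-1,\lfloor\cdot\rfloor)\le\tau(i-1,\tfrac{\lambda_i(x)}{p_i(x)}Q^{-(j-i)})$ for \emph{every} admissible $\lambda_i$, and then observes that applying $\min_{\lambda_i}\max_x$ to both sides directly yields $T(i,j)\le\tau(i,Q^{-(j-i)})$ via \eqref{eq:DP}; you instead exhibit the specific witness $\lambda_i^\star$ coming from $\gamma_i^\star$ in Lemma~\ref{lm:recursion}. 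Both arguments are valid and equally short; the paper's version is a bit more symmetric between the two directions, while yours makes the ``one unit of slack per level'' accounting more explicit.
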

\begin{proof}
We prove this by induction.
The base case is trivially true by the definition of $T(1,j)$, even for $j<0$ and $j>s$.
Now assume that the statement is true for $i-1$ $(i\geq 2)$ and all $j \in  \Z \cup \{+\infty\}$.
We prove it is also true for $i$ and all $j \in  \Z \cup \{+\infty\}$.
By the induction hypothesis, we have that
$$
T\Bigl(\,i-1, \bigl\lfloor j+\log_Q \frac{p_i(x)}{\lambda_i(x)} \bigr\rfloor\,\Bigr)
\leq \tau\Bigl(\,i-1, Q^{-(\lfloor j+\log_Q \frac{p_i(x)}{\lambda_i(x)} \rfloor - i+1)  }\,\Bigr) \leq
\tau\Bigl(\,i-1, \frac{\lambda_i(x)}{p_i(x)}Q^{-(j-i)}\,\Bigr) \quad \text{ and }
$$
$$
T\Bigl(\,i-1, \bigl\lfloor j+ \log_Q \frac{p_i(x)}{\lambda_i(x)} \bigr\rfloor\,\Bigr)
\geq \tau\Bigl(\,i-1, Q^{-\lfloor j+ \log_Q \frac{p_i(x)}{\lambda_i(x)} \rfloor}\,\Bigr)
\geq \tau\Bigl(i-1, \frac{\lambda_i(x)}{p_i(x)} Q^{-j}\Bigr).
$$
Note that the above inequalities hold even $ j+\log_Q \frac{p_i(x)}{\lambda_i(x)}$ is negative
or larger than $s$, or $\lambda_i(x)=0$.
Taking the maximum over $x$ and the minimum over $\lambda_i$ does not change the direction of the inequalities.
Combining the above inequalities with \eqref{eq:DP}, we  complete the proof.
\end{proof}

With this lemma, we can approximate the recursion~\eqref{eq:DP} by solving the recursion~\eqref{eq: T-recurrence}.
The pseudocode of the dynamic program is provided in Algorithm~\ref{algo1}.
Note that it is still not clear how to implement \eqref{eq: T-recurrence} in polynomial time, as the number of possible functions $\lambda_i$ can be infinite.
We address this issue in the next section.
Assuming~\eqref{eq: T-recurrence} can be implemented efficiently,
we can show the output of the algorithm is a good approximation of the true probability with the following lemma.

\begin{lemma}
The output $\tF(C)$ of Algorithm~\ref{algo1} is a $(1\pm\epsilon)$-approximation of $F(C)$.
\end{lemma}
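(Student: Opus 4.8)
The plan is to read $\tF(C)$ off from the definition $j^\star=\min\{j:T(n,j)\le C\}$ and to sandwich $F(C)$ using Lemma~\ref{lemma: approx} at $i=n$ together with the identity $F(C)=\max\{a:\tau(n,a)\le C\}$ from the first lemma. As a preliminary I would record that, since $\tau(n,\cdot)$ is nondecreasing and the maximum defining $F(C)$ is attained, for every $a>0$ one has $\tau(n,a)\le C\Rightarrow a\le F(C)$ and $\tau(n,a)>C\Rightarrow a>F(C)$. I would also note that $j^\star$ is well defined and nonnegative: $C$ is a positive integer and, by the extended-domain convention, $T(n,j)=0<C$ for $j>s$, so $\{j:T(n,j)\le C\}$ is nonempty, while $T(n,j)=+\infty$ for $j<0$.

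The heart of the argument uses the two consequences of minimality of $j^\star$, namely $T(n,j^\star)\le C$ and $T(n,j^\star-1)>C$ (the latter reading $T(n,-1)=+\infty>C$ when $j^\star=0$). From $T(n,j^\star)\le C$ and the lower bound $\tau(n,Q^{-j^\star})\le T(n,j^\star)$ of Lemma~\ref{lemma: approx}, I obtain $\tau(n,Q^{-j^\star})\le C$, hence $Q^{-j^\star}\le F(C)$; since the algorithm outputs $\tF(C)=Q^{-j^\star+1}=Q\cdot Q^{-j^\star}$, this yields $\tF(C)\le Q\,F(C)=(1+\epsilon/n)F(C)\le(1+\epsilon)F(C)$. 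Symmetrically, from $T(n,j^\star-1)>C$ and the upper bound $T(n,j^\star-1)\le\tau(n,Q^{-(j^\star-1-n)})$, I get $\tau(n,Q^{-(j^\star-1-n)})>C$, hence $F(C)<Q^{-(j^\star-1-n)}=Q^{n}\cdot Q^{-j^\star+1}=Q^{n}\,\tF(C)$, i.e.\ $\tF(C)>F(C)/Q^{n}$.

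To close the gap between the factor $Q^n$ appearing here and the target $1-\epsilon$, I would use $Q^{n}=(1+\epsilon/n)^{n}\le e^{\epsilon}$ together with $e^{-\epsilon}\ge 1-\epsilon$, which give $\tF(C)>F(C)/Q^{n}\ge e^{-\epsilon}F(C)\ge(1-\epsilon)F(C)$. Combined with the upper bound from the previous paragraph, this is exactly $\tF(C)\in(1\pm\epsilon)F(C)$, the claim.

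I do not expect a genuine obstacle here; the two points that need care are the boundary bookkeeping — checking that every invocation of Lemma~\ref{lemma: approx} is at an index covered by its stated range $\Z\cup\{\infty\}$, in particular the possibly negative index $j^\star-1$ and indices exceeding $s$, all handled by the extended-domain conventions — and the observation that it is the accumulated loss factor $Q^{n}$, not a single factor $Q$, that must be absorbed into $1\pm\epsilon$; this is precisely what the choice $Q=1+\epsilon/n$ (so that $Q^{n}\le e^{\epsilon}\le 1/(1-\epsilon)$) is designed to make work.
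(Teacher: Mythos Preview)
Your proposal is correct and follows essentially the same route as the paper's proof: both derive $T(n,j^\star)\le C<T(n,j^\star-1)$, apply Lemma~\ref{lemma: approx} to obtain $\tau(n,Q^{-j^\star})\le C<\tau(n,Q^{-(j^\star-1-n)})$, and conclude that $\tF(C)/F(C)\in[Q^{-n},Q]\subseteq[1-\epsilon,1+\epsilon]$. You are merely more explicit than the paper about the boundary bookkeeping (well-definedness of $j^\star$, the $j^\star=0$ case) and about why $Q^{-n}\ge 1-\epsilon$, which the paper simply asserts.
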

\begin{proof}
From the choice of $j^\star$, we know that $T(n,j^\star) \leq C < T(n,j^\star-1)$.
According to Lemma \ref{lemma: approx},
we have $\tau(n,Q^{-j^\star}) \leq C < \tau(n,Q^{n-j^\star+1})$. Therefore, $F(C) = \Pr[\sum_{i=1}^{n}{X_i} \leq C] \in [Q^{-j^\star},Q^{n-j^\star+1}]$.
By outputting $\tF(C) = Q^{-j^\star+1}$,
the approximation ratio can bounded as
$1-\epsilon \leq Q^{-n} \leq \tF(C)/F(C) \leq Q \leq  1+\epsilon$.
\end{proof}

\subsection{An Efficient Implementation using Binary Search}
\label{subsec:imp}

\eat{
\begin{algorithm}[t]
\caption{Efficient Implementation of Recursion~\eqref{eq: T-recurrence}}
\label{algo2}
\begin{algorithmic}[1]
\STATE Set $\mathcal{L} = 0$, $\mathcal{R} = C$.
\WHILE{$\mathcal{R} > \mathcal{L}$}
\STATE Set $\calT = \lfloor (\mathcal{L}+\mathcal{R}) / 2\rfloor$.
\STATE Let $P_{i}(0) = \Pr[X_i\in (\calT-\infty, \calT-T(i-1,0)]$
\FOR{$m = 1 \to s$}
\STATE Let $P_{i}(m)= \Pr[X_i\in (\calT-T(i-1,m-1), \calT-T(i-1,m)]$ (via oracle $\oracle_i$).
\ENDFOR
\IF{$\sum_{m=0}^{s}{Q^{j-m} P_i(m)} \geq 1$}
\STATE Set $\mathcal{R} = \calT$
\ELSE
\STATE Set $\mathcal{L} = \calT+1$.
\ENDIF
\ENDWHILE
\STATE Set $T(i,j) = \mathcal{L}$.
\end{algorithmic}
\end{algorithm}
}

\begin{algorithm}[t]
\caption{Efficient Implementation of Recursion~\eqref{eq: T-recurrence}}
\label{algo2}
Initially, let $\mathcal{L} = 0$, $\mathcal{R} = n(C+1)$\;
\While{$\mathcal{R} > \mathcal{L}$}{
    $\calT \leftarrow \lfloor (\mathcal{L}+\mathcal{R}) / 2\rfloor$\;
    $P_{i}(0) \leftarrow \Pr\bigl[X_i\in (\calT-\infty, \calT-T(i-1,0)]\,\bigr]$ (via oracle $\oracle_i$)\;
    \For{$m = 1,2,\ldots, s+1$}{
        $P_{i}(m)\leftarrow \Pr\bigl[X_i\in (\calT-T(i-1,m-1), \calT-T(i-1,m)]\,\bigr]$ (via oracle $\oracle_i$)\;
    }
   \eat{$Z=\sum_{m=0}^{s+1}{Q^{j-m} P_i(m)}$\;}
   $Z=\sum_{m=0}^{s+1}{Q^{j+s+1-m} P_i(m)}$\;
    \If{$Z \geq Q^{s+1}$}{ \label{line:cp}
        $\mathcal{R} \leftarrow \calT$\; \label{line:r}
    }\Else{
        $\mathcal{L} \leftarrow \calT+1$\; \label{line:l}
    }
}
{\bf Return} $T(i,j) := \mathcal{L}$\;
\end{algorithm}

In this subsection, we show how to implement the recursion \eqref{eq: T-recurrence} in polynomial time.
Suppose we have already computed $T(i',j')$ for all $i' < i$ and $0\leq j'\leq s$ and
we are trying to compute the value $T(i,j)$.
Our approach is based on a binary search over the range of $T(i,j)$.
We maintain an interval $[\calL,\calR]$ such that $T(i,j)\in [\calL,\calR]$ throughout the algorithm.
In each iteration, we make a guess $\calT=\lfloor (\calL+\calR)/2 \rfloor$ and decide whether $T(i,j) \leq \calT$ using
the criterion in Lemma~\ref{lm:binary}.
It is not immediately clear how we can check efficiently whether the criterion is met.
We address this issue in Lemma~\ref{lm:segment}.
The pseudocode is provided in Algorithm~\ref{algo2}.

\begin{lemma} \label{lm:binary}
Suppose we are computing $T(i,j)$ and $\calT$ is the current guess.
We define that for all  $x \in \supp_i$,
$$\lambda_i'(x) = \max\bigl\{z\ \mid z \in \R_{\geq 0} \text{ and } T\bigl(i-1, j+\lfloor\log_{Q}\frac{p_i(x)}{z}\rfloor\bigr)+x \leq \calT\bigr\}. $$
Then,
$T(i,j) \leq \calT$ if and only if $\sum_{x \in \supp_i}\lambda_i'(x) \geq 1$.
\end{lemma}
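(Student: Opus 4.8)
The plan is to unwind the definition of $T(i,j)$ from \eqref{eq: T-recurrence} and match it against the quantities $\lambda_i'(x)$. Recall that
\[
T(i,j) = \min_{\lambda_i}\max_{x \in \supp_i}\Bigl\{T\Bigl(i-1, j+\lfloor\log_{Q}\tfrac{p_i(x)}{\lambda_i(x)}\rfloor\Bigr)+x\Bigr\}
\quad\text{subject to}\quad \sum_{x\in\supp_i}\lambda_i(x)\geq 1.
\]
So $T(i,j)\leq \calT$ holds precisely when there exists a feasible function $\lambda_i$ (i.e.\ with $\sum_x \lambda_i(x)\geq 1$) for which $T(i-1, j+\lfloor\log_Q\frac{p_i(x)}{\lambda_i(x)}\rfloor)+x\leq \calT$ for every $x\in\supp_i$. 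The key observation is that $\lambda_i'(x)$, as defined, is exactly the \emph{largest} value $z$ that any individual coordinate $\lambda_i(x)$ could take while still respecting the per-coordinate inequality $T(i-1, j+\lfloor\log_Q\frac{p_i(x)}{z}\rfloor)+x\leq \calT$. I should first check this maximum is well-defined: as $z\to 0$ the floor term $\lfloor\log_Q\frac{p_i(x)}{z}\rfloor\to+\infty$, and by our boundary convention $T(i-1,+\infty)=-\infty$, so the constraint is satisfied for small $z$; meanwhile the set of feasible $z$ is bounded above (for large $z$ the argument of $T(i-1,\cdot)$ drops to a negative index, where $T=+\infty$, unless $x\leq\calT$ trivially — handle that case separately, where $\lambda_i'(x)=+\infty$ and the conclusion is immediate). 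One also needs that the feasible set of $z$ is closed, which follows because $T(i-1,\cdot)$ is a step function of its second argument and $\lfloor\log_Q\frac{p_i(x)}{z}\rfloor$ is right-continuous / piecewise-constant in $z$, so the supremum is attained.

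Given that $\lambda_i'(x)$ is the coordinate-wise maximal feasible value, the two directions follow cleanly. For the ``if'' direction: if $\sum_{x}\lambda_i'(x)\geq 1$, then $\lambda_i := \lambda_i'$ is itself a feasible function in \eqref{eq: T-recurrence}, and by construction of $\lambda_i'(x)$ it satisfies $T(i-1, j+\lfloor\log_Q\frac{p_i(x)}{\lambda_i'(x)}\rfloor)+x\leq\calT$ for all $x$; hence the min-max in \eqref{eq: T-recurrence} is at most $\calT$, i.e.\ $T(i,j)\leq\calT$. For the ``only if'' direction: if $T(i,j)\leq\calT$, pick a minimizing (or near-minimizing) $\lambda_i$ witnessing this, so $T(i-1, j+\lfloor\log_Q\frac{p_i(x)}{\lambda_i(x)}\rfloor)+x\leq\calT$ for every $x$; then each $\lambda_i(x)$ lies in the feasible set defining $\lambda_i'(x)$, so $\lambda_i(x)\leq\lambda_i'(x)$, and therefore $\sum_x\lambda_i'(x)\geq\sum_x\lambda_i(x)\geq 1$.

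The main obstacle is not the logical skeleton — that is a short argument — but the careful handling of the degenerate and boundary cases that the extended domain of $T$ was introduced to absorb: namely $x>\calT$ (giving $\lambda_i'(x)=+\infty$), $p_i(x)=0$ points (which can be excluded from $\supp_i$ or assigned $\lambda_i'(x)=0$), the index $j+\lfloor\log_Q\frac{p_i(x)}{z}\rfloor$ running below $0$ or above $s$, and confirming that the supremum in the definition of $\lambda_i'(x)$ is genuinely a maximum. I would dispatch these first, then present the two-line ``if'' and ``only if'' arguments above. A minor subtlety worth stating explicitly is that the minimum in \eqref{eq: T-recurrence} is attained (or, if one prefers to avoid arguing attainment, one runs the ``only if'' direction with an arbitrary feasible witness rather than an optimal one, which is all that is needed).
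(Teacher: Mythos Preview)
Your proof is correct and follows exactly the paper's approach: for the ``if'' direction you take $\lambda_i:=\lambda_i'$ as a feasible witness in \eqref{eq: T-recurrence}, and for the ``only if'' direction you take an optimal $\lambda_i^\star$ and observe $\lambda_i^\star(x)\leq\lambda_i'(x)$ coordinate-wise. One small inaccuracy in your boundary-case discussion: $\lambda_i'(x)$ is never $+\infty$, since for sufficiently large $z$ the index $j+\lfloor\log_Q\frac{p_i(x)}{z}\rfloor$ is negative and $T(i-1,\cdot)=+\infty$ there, regardless of whether $x\leq\calT$; the paper defers these edge cases to the subsequent lemmas rather than treating them here.
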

\begin{proof}
Suppose $\sum_{x \in \supp_i}\lambda_i'(x) \geq 1$.
This means that there exists a choice of $\lambda_i := \lambda_i'$ such that
$$
\max_{x \in \supp_i}\Bigl\{T(i-1, j+\lfloor\log_{Q}\frac{p_i(x)}{\lambda_i(x)}\rfloor)+x\Bigr\} \leq \calT,
$$
and therefore $T(i,j) \leq \calT$ by \eqref{eq: T-recurrence}.

On the other hand, assume $T(i,j) \leq \calT$.
Suppose that $T(i,j)$ is achieved by $\lambda_i := \lambda_i^\star$ in \eqref{eq: T-recurrence}.
We know that $T(i-1,j+\lfloor \log_Q\frac{p_i(x)}{\lambda^\star_i(x)} \rfloor )+x \leq T(i,j)\leq \calT$ for any $x\in \supp_i$.
By definition of $\lambda_i'$, we see that $\lambda_i^\star(x) \leq \lambda'_i(x)$, for all $x \in \supp_i$. Therefore, $\sum_{x \in \supp_i}{\lambda_i'(x)} \geq \sum_{x \in \supp_i}{\lambda_i^\star(x)} \geq 1$.
\end{proof}

From Lemma~\ref{lm:binary}, we can see that if $|\supp_i|$ is bounded by a polynomial,
we can decide whether $T(i,j)\leq \calT$ in polynomial time.
However, $|\supp_i|$ can be exponentially large.
We resolve this issue by showing that $\sum_{x\in\supp_i}\lambda'_i(x)$ can be computed using at most $O(s)$
calls to $\oracle_i$.
For this purpose, we divide the support of $X_i$
into $s+3$ segments such that the sum of $\lambda_i'(x)$ values over each segment can be computed efficiently.
In particular, we divide $\N$ into segments
\begin{align*}
\Sg_0 &=(-\infty, \calT-T(i-1,0)] \cap \N, \\
\Sg_k &=(\calT-T(i-1,k-1), \calT-T(i-1,k)] \cap \N, \text{ for all } k=1,\ldots, s+1, \\
\Sg_{s+2} &=( \calT-T(i-1,s+1),\infty) \cap \N.
\end{align*}

Let us first investigate the last segment.
\begin{lemma}
\label{lm:segment}
Suppose $x \in \Sg_{s+2}$
and $\lambda_i'(x)$ is defined as in Lemma~\ref{lm:binary}. Then, we have that
$\lambda_i'(x) = 0.$
\end{lemma}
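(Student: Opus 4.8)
The plan is to reduce the claim to the single fact that $\Sg_{s+2}$ consists exactly of the integers strictly larger than $\calT$, together with the observation that $T(i-1,\cdot)$ takes only nonnegative values at finite indices.

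First I would unwind the definition of $\Sg_{s+2}$. Since $s+1>s$, the extension convention for $T$ gives $T(i-1,s+1)=0$, so $\Sg_{s+2}=(\calT-T(i-1,s+1),\infty)\cap\N=(\calT,\infty)\cap\N$; thus $x\in\Sg_{s+2}$ means $x\in\N$ with $x>\calT$. Recall from Lemma~\ref{lm:binary} that $\lambda_i'(x)=\max\{z\in\R_{\ge 0}:T(i-1,j+\lfloor\log_{Q}\frac{p_i(x)}{z}\rfloor)+x\le\calT\}$. The choice $z=0$ is always feasible: by the convention adopted after \eqref{eq: T-recurrence}, it yields index $+\infty$, so $T(i-1,+\infty)+x=-\infty+x=-\infty\le\calT$. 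Hence $\lambda_i'(x)\ge 0$, and the task is to show that no $z>0$ is feasible, which forces the maximum to be attained only at $z=0$.

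Next I would rule out every $z>0$. Fix $z>0$. Since $x\in\supp_i$ we have $p_i(x)>0$, so $p_i(x)/z$ is a finite positive real, $\log_{Q}(p_i(x)/z)$ is a finite real, and $m:=j+\lfloor\log_{Q}\frac{p_i(x)}{z}\rfloor$ is a finite integer. I claim $T(i-1,m)\ge 0$ in all cases: if $m<0$ then $T(i-1,m)=+\infty$; if $m>s$ then $T(i-1,m)=0$ by the extension convention; and if $0\le m\le s$ then $Q^{-m}>0$, so $\tau(i-1,Q^{-m})\ge 0$ (the value $-\infty$ is attained by $\tau$ only at argument $0$), whence $T(i-1,m)\ge\tau(i-1,Q^{-m})\ge 0$ by Lemma~\ref{lemma: approx}. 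Therefore $T(i-1,m)+x\ge x>\calT$, so this $z$ violates the defining inequality of $\lambda_i'(x)$. As this holds for every $z>0$, we conclude $\lambda_i'(x)=0$.

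There is essentially no obstacle here; the only care needed is in handling the boundary value $z=0$ (via the convention $T(i-1,+\infty)=-\infty$) and the other extension conventions ($T(i,j)=+\infty$ for $j<0$, $T(i,j)=0$ for $j>s$), together with the elementary fact that $\tau$, and hence $T$ at finite indices, is never negative.
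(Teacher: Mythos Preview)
Your proof is correct and follows the same line as the paper's: identify $\Sg_{s+2}$ with the integers $x>\calT$ using $T(i-1,s+1)=0$, then argue that for every $z>0$ the resulting index is finite and $T(i-1,\cdot)$ is nonnegative there, so $T(i-1,m)+x\ge x>\calT$ and only $z=0$ survives. The paper states this in two sentences and leaves the nonnegativity of $T(i-1,m)$ at finite $m$ implicit; you make it explicit via the extension conventions and Lemma~\ref{lemma: approx}, which is a perfectly valid (and arguably cleaner) way to justify that step.
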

\begin{proof}
First, recall that $T(i-1,s+1)=0$.
For any $x>\calT$, the only possible nonnegative value $z$
that can make
$T\bigl(i-1, j+\lfloor\log_{Q}(\frac{p_i(x)}{z})\rfloor\bigr)+x \leq \calT$
is $0$, since $T(i-1,+\infty)=-\infty$.
\end{proof}

For each other segment,
$\lambda_i'(x)$ can be computed using the following lemma.

\begin{lemma}
\label{lm:segment}
Suppose $x \in \Sg_m$ for some $m \in [0,s+1]$
and $\lambda_i'(x)$ is defined as in Lemma~\ref{lm:binary}. Then, we have that
\begin{equation}
\lambda_i'(x) = Q^{j-m} p_i(x).
\end{equation}
\end{lemma}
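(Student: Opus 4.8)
Fix $x\in\supp_i$ lying in $\Sg_m$, so that $p_i(x)>0$. The plan is to rewrite the condition defining $\lambda_i'(x)$ (from Lemma~\ref{lm:binary}) in terms of the ``index'' $k=j+\lfloor\log_Q\frac{p_i(x)}{z}\rfloor$ and then to locate the \emph{smallest} index $k$ for which $T(i-1,k)\le\calT-x$. Before doing anything else I would record one preliminary fact: $T(i-1,\cdot)$ is non-increasing on $\Z\cup\{+\infty\}$, using the conventions $T(i-1,k)=+\infty$ for $k<0$, $T(i-1,k)=0$ for $k>s$, and $T(i-1,+\infty)=-\infty$. This is proved by a routine induction on the first coordinate using~\eqref{eq: T-recurrence}: replacing $j$ by $j+1$ merely shifts every index occurring inside $T(i-1,\cdot)$ up by one, so monotonicity propagates through the $\min\max$. (This same fact is what makes $\Sg_0,\dots,\Sg_{s+1}$ genuine, possibly empty, intervals in the first place.)

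With this in hand, unpacking the definition of $\Sg_m$ shows that the hypothesis $x\in\Sg_m$ is equivalent to the pair of inequalities $T(i-1,m)\le\calT-x<T(i-1,m-1)$, where for $m=0$ the right-hand one is the vacuous $\calT-x<+\infty$ obtained from the convention $T(i-1,-1)=+\infty$. Since $\calT$ and $x$ are finite, $\calT-x$ is finite, so monotonicity of $T(i-1,\cdot)$ turns these two inequalities into the clean statement $\{k\in\Z\cup\{+\infty\}:T(i-1,k)\le\calT-x\}=\{k:k\ge m\}$.

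Finally I would reparametrize the maximization. For $z>0$ put $k(z)=j+\lfloor\log_Q\frac{p_i(x)}{z}\rfloor\in\Z$; by the previous paragraph $T(i-1,k(z))+x\le\calT$ holds iff $k(z)\ge m$, and since $m-j$ is an integer this is iff $\log_Q\frac{p_i(x)}{z}\ge m-j$, i.e.\ (because $Q>1$) iff $z\le p_i(x)\,Q^{j-m}$. The value $z=p_i(x)Q^{j-m}$ satisfies this with equality, its index being exactly $m$, while no larger $z$ does; and $z=0$, which is formally admissible because $T(i-1,+\infty)=-\infty$, only contributes the vacuous bound $\lambda_i'(x)\ge 0$. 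Hence the maximum defining $\lambda_i'(x)$ equals $p_i(x)Q^{j-m}=Q^{j-m}p_i(x)$, as claimed. The obstacle I foresee is not a single hard step but the accumulation of boundary bookkeeping --- in particular the $m=0$ case (where one must note that any $z>p_i(x)Q^{j}$ pushes $k(z)$ below $0$, landing on $T(i-1,\cdot)=+\infty>\calT-x$ and hence being inadmissible) and the consistent handling of the $\pm\infty$ conventions --- together with the fact that the monotonicity of $T$, used throughout, is left implicit in the excerpt and must be supplied.
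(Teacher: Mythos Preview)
Your argument is correct and follows essentially the same route as the paper: identify the candidate $z=Q^{j-m}p_i(x)$, verify it is admissible because $T(i-1,m)\le\calT-x$, and rule out any larger $z'$ by noting that its index drops to at most $m-1$ where $T(i-1,\cdot)\ge T(i-1,m-1)>\calT-x$. The paper's proof uses monotonicity of $T(i-1,\cdot)$ in this last step without stating it; you make that lemma explicit and also handle the $m=0$ and $z=0$ boundary cases more carefully, which is a welcome improvement in rigor but not a different method.
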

\begin{proof}
Fix any  $x \in \Sg_m$ for some $m \in [0,s+1]$.
Let $z>0$ be such that
\begin{equation*}
j+\log_Q\Bigl(\frac{p_i(x)}{z}\Bigr) = m.
\end{equation*}
Now, we show $\lambda'_i(x)=z$.
Since $x\in \Sg_m$, we can see that $\calT-x\in [T(i-1,m),T(i-1,m-1))$.
Therefore, we have that
$$
T\bigl(i-1,j+\lfloor \log_Q\Bigl(\frac{p_i(x)}{z}\Bigr) \rfloor\bigr) = T(i-1,m) \leq \calT-x.
$$
Now, we show any value $z'$ larger than $z$ does not satisfy the above inequality.
This is because
$T(i-1,j+\lfloor \log_Q\Bigl(\frac{p_i(x)}{z'}\Bigr) \rfloor)$ is at least $T(i-1,m-1)$
which is greater than $\calT-x$.
Since $z$ is the largest value that makes the inequality true, we have $\lambda_i'(x)=z$.
\end{proof}

As a result, it is sufficient to ask the $\oracle_i$ for the probability values $\strin{P_i(m) = }\Pr[X_i\in \Sg_m]$
and we can compute $\sum_{x\in \Sg_m}\lambda_i'(x)$ simply by
\begin{equation} \label{eq:sum_of_lambda}
\sum_{x\in \Sg_m}\lambda_i'(x)= Q^{j-m} P_i(m).
\end{equation}

Together with Lemma~\ref{lm:binary}, (\ref{eq:sum_of_lambda}) shows that the criterion $T(i,j) \leq \calT$ is equivalent to $\sum_{m=0}^{s+1}$ ${Q^{j-m} P_i(m)} \geq 1$ . Multiplying $Q^{s+1}$ on both sides, we derive Line~\ref{line:cp} of Algorithm~\ref{algo2}, which is used to decide the criterion $T(i,j) \leq \calT $.

\begin{lemma}
\label{eq: cost_trans}
Suppose the oracle can answer each query in constant time.
The optimization over all $\lambda_i$ and $x$ in equation (\ref{eq: T-recurrence}) can be done in time
$O\bigl(\frac{n}{\epsilon}\log{\frac{1}{\q}}\log C\bigr)$,
where $\q=\prod_{i} \Pr[X_i=0]$.
\end{lemma}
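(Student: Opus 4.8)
The plan is to show that one table entry $T(i,j)$ can be produced by Algorithm~\ref{algo2}, a binary search over the (integer) range of $T(i,j)$, in which each probe is resolved by the threshold criterion of Lemma~\ref{lm:binary}. First I would record the standing assumptions for this step: the entire previous row $T(i-1,0),\dots,T(i-1,s+1)$ has already been computed, and $T(i-1,\cdot)$ is nonincreasing on $\{0,\dots,s+1\}$ (an easy fact provable alongside Lemma~\ref{lemma: approx}), so that $\Sg_0,\dots,\Sg_{s+2}$ genuinely partition $\N$. Given a guess $\calT$, I compute $P_i(m)=\Pr[X_i\in\Sg_m]$ for $m=0,1,\dots,s+1$ using one call to $\oracle_i$ per value — each segment endpoint has the form $\calT-T(i-1,\cdot)$ and is thus already available — and then form $Z=\sum_{m=0}^{s+1}Q^{j+s+1-m}P_i(m)$ in $O(s)$ further arithmetic operations. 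By Lemma~\ref{lm:binary} together with \eqref{eq:sum_of_lambda} and the fact that $\Sg_{s+2}$ contributes nothing to $\sum_x\lambda_i'(x)$ (Lemma~\ref{lm:segment}), the predicate ``$T(i,j)\le\calT$'' is equivalent to $\sum_{m=0}^{s+1}Q^{j-m}P_i(m)\ge 1$, i.e.\ to $Z\ge Q^{s+1}$, which is exactly the test on Line~\ref{line:cp}. Hence each probe costs $O(s)$ time.

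Next I would verify the correctness and iteration count of the search. Initialize $[\calL,\calR]=[0,n(C+1)]$ and maintain the invariant $\calL\le T(i,j)\le\calR$: this holds initially because every finite value of $T(i,\cdot)$ is sandwiched, via Lemma~\ref{lemma: approx}, between values of $\tau(i,\cdot)$, and $\tau(i,\cdot)$ never exceeds the maximum possible value of $\sum_{k\le i}X_k\le i(C+1)\le n(C+1)$. The invariant is preserved at each step: by monotonicity of the recurrence in $\calT$ and the equivalence just stated, a successful probe ($Z\ge Q^{s+1}$) justifies $\calR\leftarrow\calT$, while a failed probe justifies $\calL\leftarrow\calT+1$ since the value is then at least $\calT+1$. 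Each iteration halves $\calR-\calL$, so after $O(\log(n(C+1)))$ iterations we reach $\calL=\calR=T(i,j)$, which the algorithm returns.

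Finally I would assemble the bound. The number of probes is $O(\log(n(C+1)))=O(\log C)$: one may assume $n\le C$ without loss of generality (or, equivalently, cap every $T(i,j)$ at $C+1$, which changes no comparison of the form ``$\le C$'' and hence not the output). Each probe costs $O(s)$, and $s=\lceil\log_Q\frac{1}{\q}\rceil$ with $\ln Q=\ln(1+\eps/n)\ge \eps/(2n)$ for $\eps\le n$, so $s=O\!\bigl(\tfrac{n}{\eps}\log\tfrac{1}{\q}\bigr)$. Multiplying, computing $T(i,j)$ takes $O\!\bigl(\tfrac{n}{\eps}\log\tfrac{1}{\q}\,\log C\bigr)$ time, as claimed. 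The one genuinely nontrivial ingredient — that optimizing over the infinite family of functions $\lambda_i$ in \eqref{eq: T-recurrence} collapses to a single $O(s)$-time threshold test — has already been discharged by Lemmas~\ref{lm:binary} and~\ref{lm:segment}; what remains here is essentially bookkeeping, and the only points needing care are the off-by-one handling of the boundary segments $\Sg_0$ and $\Sg_{s+2}$ and the check that all the probabilities $P_i(m)$ are computable from previously stored entries.
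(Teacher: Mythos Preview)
Your proposal is correct and follows essentially the same approach as the paper: the paper's own proof is a two-line argument that the binary search takes $O(\log C)$ iterations with $O(s)$ oracle calls per iteration, yielding $O(s\log C)=O\bigl(\tfrac{n}{\epsilon}\log\tfrac{1}{\q}\log C\bigr)$. You have simply filled in the bookkeeping the paper leaves implicit (monotonicity of $T(i-1,\cdot)$, the binary-search invariant, and the $\log(n(C+1))$ versus $\log C$ discrepancy), but the structure and the key ingredients are identical.
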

\begin{proof}
The binary search takes $\log C$ iterations, while in each iteration we make at most $O(s)$ oracle calls.
So the implementation of the recursion step runs in $O(s \log C) = O(\frac{n}{\epsilon}\log \frac{1}{\q} \log C)$
since $s = O(\frac{n}{\epsilon} \log{\frac{1}{\q}})$.
\end{proof}

The number of states in the dynamic program is $O(n s)$.
Combined with Lemma \ref{eq: cost_trans},
the total running time of Algorithm \ref{algo1}, which calls Algorithm \ref{algo2} as a subroutine, is $O\bigl(\frac{n^3}{\epsilon^2} \log(\frac{1}{\q})^2 \log C\bigr)$.
This completes the proof of Theorem \ref{th:main}.

\subsection{The Bit Complexity of the Algorithm}
\label{sec:bit}
Our results have been stated under the assumption that any real arithmetic operation has unit cost.
Now we consider the computation model
where only bit operations are allowed.
Moreover, the oracles also return numerical values that are encoded in bits.
To model this, we assume that each query to the oracle contains an additional integer $\ell$ (called precision paramter),
which specifies the number of bits the oracle should return.
Upon such a query, the oracle returns the first (i.e., the most significant) $\ell$ bits of the answer, which is always between 0 and 1
\footnote{
This can model the case where the oracle is given by a mathematical formula
and there is a numerical algorithm that can compute the answer to arbitrary precision.
}.
Our goal is to show that a slightly modified algorithm still preserves a ($1\pm\epsilon$)-approximation, while the bit complexity is also
$\poly(n, \epsilon^{-1}, \log C, \log \frac{1}{\q})$.

First, we use the oracle to figure out a close estimate of $\log 1/\q$,
since it determines $s$, which relates to the table size of the dynamic problem.
For each $\oracle_i$, we repeatedly query the value
$\Pr[X_i=0]$ with precision $\ell=1,2,3,\ldots$ until we get the first nonzero bit.
It is easy to see that we stop with $\ell_i = \lceil -\log \Pr[X_i=0]\rceil+1$ steps.
It then suffices for our algorithm to use an upper bound of $\log 1/\q$,
which we choose to be $\sum_i \ell_i$.

\eat{In the sequel, we run Algorithm~\ref{algo1}, which calls Algorithm~\ref{algo2} as a subroutine.
We use a slightly different value of $Q$, which we choose to be $1+2^{-\ell_q}$, where $\ell_q = \lceil \log_2 \frac{2n}{\epsilon} \rceil$.
Notice that $\calL$, $\calR$, $\calT$ can be encoded in $O(\log C)$ bits, and $Q$ in $O(\log \frac{n}{\epsilon})$ bits. When we multiply a number of $a$ bits and a number of $b$ bits, we use $a+b$ bits for the result so that no precision is lost in mulplication. Hence, any $Q^{j-m} (0\leq m\leq s)$ used in the algorithm can be encoded in $O(s \log \frac{n}{\epsilon})$ bits.}

In the sequel, we run Algorithm~\ref{algo1} and Algorithm~\ref{algo2} as its subroutine with a slightly different value of $Q = 1+2^{-\ell_q}$, where $\ell_q = \lceil \log_2 \frac{n}{\epsilon} \rceil+2$, so that $Q \leq 1+\frac{\eps}{4n}$. Notice that  $\calL$, $\calR$, $\calT$ can be encoded in $O(\log C)$ bits and with $O(\log \frac{n}{\epsilon})$ bits, $Q$ has exact representation. When we multiply a number of $a$ bits and a number of $b$ bits, we use $a+b$ bits for the result so that no precision is lost. Hence, any $\done{Q^{j+s+1-m}} (0\leq m\leq \done{s+1})$ used in the algorithm can be encoded in $O(s \log \frac{n}{\epsilon})$ bits.

\eat{However, there is no guarantee that $Q^{j-m}$ has a finite exact representation when $m > j$. To resolve the issue, we make a mild modification to Line \ref{line:cp} of Algorithm~\ref{algo2}: instead of deciding whether $Z \geq 1$, we compare $Y = \sum_{m=0}^{s+1}{Q^{j} P_i(m)} \geq Q^{m}$. }

The problematic part is the answer $P_i(m)$ returned by the oracle (in Steps 4 and 7 of Algorithm~\ref{algo2}) is of finite precision.
Whenever Algorithm~\ref{algo2} attempts to make an oracle call, e.g., $\Pr[n_1\leq X_i\leq n_2]$, we set the precision parameter to be $L$,
which we will determine later, and call $\oracle_i$, which returns a truncated probability $\widehat{P}_i(m)$ with a truncation error $P_i(m)-\widehat{P}_i(m) \in [0, 2^{\done{-L+1}})$.
To distinguish from the original version of Algorithm~\ref{algo2},
we call the new version with finite precision {\em the bit version}.
We use $\widehat{T}(i,j)$ to
denote the $T(i,j)$ value obtained from the bit version of Algorithm~\ref{algo2}.
To account for this truncation error, we need a modified version of Lemma \ref{lemma: approx}:

\begin{lemma} \label{lm:approx_finite}
\done{Let the precision parameter be $L$ for each oracle call.} For all $i \in \{1,2,...,n\}$ and $j \in \mathbb{Z} \cup \{+\infty\}$, we have
\begin{equation} \label{eq:hat_t}
\tau(i,Q^{-j}) \leq \hT(i,j) \leq \tau(i, (1+\eta)^{\done{i}} Q^{-(j-i)} ),
\end{equation}
where \done{$\eta = (Q^{s+1}-Q^{-1})/(Q-1) \cdot 2^{-(L-1)}$}.
\end{lemma}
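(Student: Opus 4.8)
The plan is to imitate the induction in the proof of Lemma \ref{lemma: approx}, tracking how the per-level truncation errors accumulate. The lower bound $\tau(i,Q^{-j})\le\hT(i,j)$ should go through almost verbatim: truncating each $\hP_i(m)$ \emph{downward} only makes the test ``$Z\ge Q^{s+1}$'' harder to pass, so the binary search in the bit version returns a value of $\calL$ that is at least as large as in the exact version; equivalently, the effective weights $\widehat\lambda_i'(x)$ never exceed the true $\lambda_i'(x)$, and the argument of Lemma \ref{lm:binary} combined with Lemma \ref{lemma: approx}'s lower half gives $\hT(i,j)\ge T(i,j)\ge\tau(i,Q^{-j})$. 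I would state this first and spend most of the effort on the upper bound.

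For the upper bound I would set up the induction so that the hypothesis for level $i-1$ is exactly \eqref{eq:hat_t} with $i$ replaced by $i-1$. Fix $i$, $j$, and consider the optimal $\lambda_i^\star$ achieving $\tau(i,Q^{-j})$ in \eqref{eq:DP} (or rather the discretized analogue). The key quantitative point is that when Algorithm~\ref{algo2} tests a candidate $\calT$, the quantity it actually computes is $\widehat Z=\sum_{m=0}^{s+1}Q^{j+s+1-m}\hP_i(m)$, which underestimates the true $Z$ by at most $\sum_{m=0}^{s+1}Q^{j+s+1-m}\cdot 2^{-(L-1)} \le 2^{-(L-1)}Q^{j}\cdot\frac{Q^{s+1}-Q^{-1}}{Q-1}$; dividing by the ``true'' scale $Q^{j}$ this is precisely $\eta$. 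Hence whenever the \emph{exact} algorithm would certify $T(i,j)\le\calT$ with a strict margin — i.e.\ $\sum_m Q^{j-m}P_i(m)\ge 1+\eta$ — the bit version also certifies it. Reinterpreting this through Lemma~\ref{lm:segment}, it means the bit version behaves like the exact recursion run with the constraint $\sum_x\lambda_i(x)\ge 1+\eta$ loosened, equivalently like computing $\tau(i-1,\cdot)$ at arguments inflated by a factor $(1+\eta)$. Combining that one extra $(1+\eta)$ factor at level $i$ with the inductive $(1+\eta)^{i-1}$ factor from level $i-1$, and the $Q^{-(j-i)}$ bookkeeping already present in Lemma~\ref{lemma: approx}, yields $\hT(i,j)\le\tau(i,(1+\eta)^{i}Q^{-(j-i)})$, closing the induction.

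The main obstacle is being careful about where the error actually enters: the truncation affects $\hP_i(m)$, but what Lemma~\ref{lm:segment} gives is $\sum_{x\in\Sg_m}\lambda_i'(x)=Q^{j-m}P_i(m)$, so I must argue that replacing $P_i(m)$ by $\hP_i(m)=P_i(m)-\delta_m$ with $\delta_m\in[0,2^{-(L-1)})$ is \emph{equivalent} to running the exact criterion with the target $1$ replaced by something in $[1-\eta,\,1]$ after renormalizing by $Q^{s+1}$ — and then translating a perturbed \emph{normalization constraint} into a perturbed \emph{second argument} of $\tau$, using monotonicity of $\tau(i-1,\cdot)$ and the recursion \eqref{eq:DP}. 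The boundary conventions ($T(i-1,m)=0$ for $m>s$, $T(i-1,m)=+\infty$ for $m<0$, the extra segment $\Sg_{s+2}$ with $\lambda_i'\equiv 0$ from Lemma~\ref{lm:segment}) need to be carried along so that the sum over $m$ really is a finite sum from $0$ to $s+1$; fortunately the $+\infty$ and $0$ conventions were designed precisely so that no separate boundary case arises, exactly as in Lemma~\ref{lemma: approx}. Once the single-level statement ``one oracle round costs a factor $(1+\eta)$ in the $\tau$-argument'' is nailed down, the induction is mechanical.
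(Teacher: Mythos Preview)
Your proposal is essentially correct and follows the same strategy as the paper: induction on $i$, with the upper bound obtained by observing that the truncation error in $Z$ is at most $\eta Q^{s+1}$, so the bit version dominates the exact recursion run with the tightened constraint $\sum_x\lambda_i(x)\ge 1+\eta$, which contributes one factor of $(1+\eta)$ per level. The paper packages both directions uniformly as an ``execution-flow comparison'': it runs the bit version $\mathcal A$ side by side with an infinite-precision copy $\mathcal A_1$ (for the lower bound) and with a copy $\mathcal A_2$ whose threshold is $(1+\eta)Q^{s+1}$ (for the upper bound), and looks at the first point where the branches in Line~\ref{line:cp} diverge.

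One point deserves more care than you give it. You write ``$\hT(i,j)\ge T(i,j)\ge\tau(i,Q^{-j})$'' as if the first inequality were immediate from truncation, but $\hT(i,j)$ is computed from $\hT(i-1,\cdot)$ while $T(i,j)$ is computed from $T(i-1,\cdot)$, so the segments $\Sg_m$ themselves differ between the two runs. Your claim \emph{is} true, but it needs its own induction: assuming $\hT(i-1,\cdot)\ge T(i-1,\cdot)$, each $x$ lands in a segment of index $m_{\hT}(x)\ge m_T(x)$, hence $Q^{j-m_{\hT}(x)}\le Q^{j-m_T(x)}$, and together with $\hP_i\le P_i$ this makes the bit-version test uniformly harder to pass. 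The paper sidesteps this by comparing $\hT(i,j)$ not with $T(i,j)$ but with the output $T_1$ of $\mathcal A_1$, which uses the \emph{same} table $\hT(i-1,\cdot)$ and exact oracle values; then only the induction hypothesis $\hT(i-1,j')\ge\tau(i-1,Q^{-j'})$ is needed. Either route works, but you should make the inductive dependence explicit. (Also, your geometric-sum bound is off by a factor of $Q$; the correct step uses $j\le s$ to compare $Z-\hK$ against $\eta\,Q^{s+1}$, not against $\eta\,Q^{j}$.)
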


\begin{proof}
The statement is trivial for \done{$j > s$ and $j < 0$}. 
For $j \in [s]$, we prove the lemma by induction on $i$.

We first prove the induction step.
The proof for the base case $i = 1$ is very similar and we present it at the end. 
For $i \geq 2$, 
suppose in the dynamic program, we have computed
the $\hT(i-1,j), ~ \forall j\in \Z$.
Assume the statement of the lemma is true for $i-1 ~(i \geq 2)$ and $j \in \Z$. Now, we prove it for $i$ and $j \in \Z$.
Fix a particular $0\leq j\leq s$.
Imagine that we run two copies of Algorithm~\ref{algo2} simultaneously:
one copy $\mathcal{A}_1$ is the original version with infinite precision, and the other $\mathcal{A}$ is the bit version
\footnote{
We note that both $\calA_1$ and $\calA$ use $\hT(i-1,j)$ values.
So the output of $\calA_1$ may not be the exact value of $T(i,j)$.
}.
Clearly, the execution flows of both $\mathcal{A}_1$ and $\mathcal{A}$ would be the same in the beginning (i.e., $\calL,\calR,\calT$ values are the same).
If the execution flows are the same throughout, the outputs of $\calA_1$ and $\calA$ are the same.
Otherwise, we consider the snapshot when the deviation first occurs
and the condition in Line \ref{line:cp} returns different boolean values between $\mathcal{A}_1$ and $\mathcal{A}$.
Recall that \done{$\K = \sum_{m = 0}^{s+1}{Q^{j+s+1-m} P_i(m)}$} and let \done{$\hK = \sum_{m = 0}^{s+1}{Q^{j+s+1-m} \hP_i(m)}$}.
It is easy to see that \done{$\hK < Q^{s+1} \leq \K$}.
So the only possibility is that  $\mathcal{A}_1$ jumps to Line \ref{line:r} and $\mathcal{A}$ goes to Line \ref{line:l}.
Let $T_1$ be output of $\calA_1$.
The above argument shows that $\hT(i,j) \geq T_1$.

Now we show $T_1 \geq \tau(i, Q^{-j})$.
By \eqref{eq: T-recurrence} and Lemma~\ref{lm:binary} (with $T(i-1,\cdot)$ replaced by $\hT(i-1,\cdot)$), $T_1$ is the solution of the following optimization problem:
\begin{equation*}
T_1 = \min\limits_{\lambda_i}\max\limits_{x \in \supp_i}\Big\{\hT(i-1, j+\lfloor \log_Q(\frac{p_i(x)}{\lambda_i(x)})\rfloor)+x\Big\}, ~~ \text{subject to } \sum\limits_{x \in \supp_i}{\lambda_i(x)} \geq 1,
\end{equation*}
Using the same proof as in Lemma \ref{lemma: approx}, we can show that $T_1 \geq \tau(i, Q^{-j})$.
Hence, we have one direction of the lemma:
$\hT(i,j) \geq \tau(i, Q^{-j}).$

The proof for the other direction is similar.
This time, we run $\mathcal{A}$ simultaneously with a slightly modified version $\calA_2$ of Algorithm~\ref{algo2}.
Suppose that the oracles in $\calA_2$ can return values with infinite precision.
The goal of $\mathcal{A}_2$ is to solve the following optimization problem:
\begin{equation}
\label{opt2}
T_2 = \min\limits_{\lambda_i}\max\limits_{x \in \supp_i}\Big\{\hT(i-1, j+\lfloor \log_Q(\frac{p_i(x)}{\lambda_i(x)})\rfloor)+x\Big\}, ~~ \text{subject to } \sum\limits_{x \in \supp_i}{\lambda_i(x)} \geq 1+\eta.
\end{equation}
We immediately can see that $T_2$ is an approximate value of $\tilde{\tau}(i, Q^{-(j-i)})$, which is defined as
\begin{equation} \label{eq:tau_tilde}
\tilde{\tau}(i, Q^{-(j-i)}) = \min\limits_{\lambda_i}\max\limits_{x \in \supp_i}\Big\{\tau(i-1, \frac{\lambda_i(x)}{p_i(x)} Q^{-(j-i)})+x\Big\}, ~~\text{subject to } \sum\limits_{x \in \supp_i}{\lambda_i(x)} \geq 1+\eta.
\end{equation}
Using the same argument as in Lemma \ref{lemma: approx}, we can show that
\begin{equation}  \label{eq:t2}
T_2 \leq \tilde{\tau}(i, (1+\eta)^{i-1} Q^{-(j-i)} ).
\end{equation}
Furthermore, comparing (\ref{eq:tau_tilde}) with (\ref{eq:DP}), we can see the connection between $\tau$ and $\tilde{\tau}$ as follows:
\begin{equation} \label{eq:tau_conn}
\tilde{\tau}(i, (1+\eta)^{i-1} Q^{-(j-i)} ) = \tau(i, (1+\eta)^{i} Q^{-(j-i)}).
 \end{equation}
Using the same argument as in Section~\ref{subsec:imp},
the optimization problem~\eqref{opt2} can be implemented by a binary search procedure almost identical to Algorithm~\ref{algo2}
except that $\calA_2$ compares $\K$ with $(1+\eta)Q^{s+1}$ in Line~\ref{line:cp}.
Notice that $\eta$ has been chosen carefully such that the inequality $\K \geq (1+\eta)Q^{s+1}$ implies $\hK \geq Q^{s+1}$.
Using a similar execution-flow analysis,
we can see that the deviation occurs when \done{$Q^{s+1} \leq \hK<\K<(1+\eta) Q^{s+1}$} and consequently we have that $T_2 \geq \hT(i,j)$.
By (\ref{eq:t2}) and (\ref{eq:tau_conn}), we have that
$\hT_2 \leq \tau(i, (1+\eta)^{i-1} Q^{-(j-i)}).$

Now we prove the base case $i = 1$.
The proof is very similar to the induction step. 
Recall that in Algorithm~\ref{algo1}, we use binary search to compute the value $T(1,j)$ exactly. The binary search maintains an interval $[\calL, \calR]$ such that $T(1,j) \in [\calL, \calR]$ throughout. 
In each iteration, we make a guess $\T$, compare it with $Q^{-j}$, and update $\calL \leftarrow \T+1$ if $\Pr[X_1 \leq \T] < Q^{-j}$, 
or $\calR \leftarrow \calT$ otherwise. 
In the bit version, $Q^{-j}$ and $\Pr[X_1 \leq \T]$ may not be exact. Nevertheless, we can still get approximation $\hT(1,j)$, which is $T(1,j)$ of finite precision using a slightly modified binary search program $\mathcal{B}$. Denote the truncated probability from the oracle $\oracle_1$ as $\hatPr[X_1 \leq \T]$. $\mathcal{B}$ follows the previous binary search scheme but replaces the decision rule $\Pr[X_1 \leq \T] < Q^{-j}$
by $Q^{j} \cdot \hatPr[X_1 \leq \T] < 1$. 
We first show the output of $\mathcal{B}$, $\hT(1,j)$ is at least $\tau(1,Q^{-j})$. Imagine another copy of 
this binary search program $\mathcal{B}_1$ running simultaneously but with infinite precision. 
Obviously, $\mathcal{B}_1$ computes the exact value $T(1,j) = \tau(1,Q^{-j})$. 
If the execution flows of $\mathcal{B}$ and $\mathcal{B}_1$
are the same throughout, then the output would be the same: $\hT(1,j) = T(1,j)$. 
Otherwise, by considering the snapshot when the deviation first occurs, i.e.,
$Q^{j} \cdot \hatPr[X_1 \leq \T] < 1 \leq Q^{j} \cdot \hatPr[X_1 \leq \T]$. 
It is not hard to see that $\hT(1,j) \geq T(1,j) = \tau(1,Q^{-j})$. 
Next, we show the upper bound by proving $\hT(1,j) \leq \tau(1, (1+\eta)Q^{-j}) \leq \tau(1, (1+\eta)Q^{1-j})$. 
Imagine a third copy of this binary search program $\mathcal{B}_2$ running simultaneously, 
but with the decision rule replaced by $Q^{j} \cdot \hatPr[X_1 \leq \T] < 1+\eta$. 
By the choice of $\eta$, 
the same execution-flow analysis gives that 
$\hT(1,j) \leq \tau(1,(1+\eta) Q^{-j})$. This concludes the proof of the base case.
\end{proof}

\done{When all $\hT(n,j)$ values have been computed (for $j \in [0,s]$), the bit version of Algorithm~\ref{algo1} returns $Q^{-j^\star+1}$ up to precision $\ell_{\text{ans}}$. The following lemma shows the returned $\tilde{F}(C)$ is an ($1\pm\epsilon$)-approximation of $F(C)$.}

\begin{lemma} \label{lm:bit_final}
By choosing \done{$Q = 1+2^{-l_q} \leq 1+\frac{\epsilon}{4n}$ for $\ell_q = \lceil \log_2 \frac{n}{\epsilon} \rceil + 2$, $L \geq \log_2 \frac{32 n^2 Q}{\eps^2}+ s + 1$ and $\ell_{\text{ans}} = s+\log_2 \frac{Q}{\eps}$} \eat{$L = \Omega(\frac{n}{\epsilon}\log \frac{n}{\epsilon} \log \frac{1}{\q})$}, the above modified algorithm can produce a ($1\pm\epsilon$)-approximation of $F(C)$.
\end{lemma}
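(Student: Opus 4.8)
The plan is to rerun the argument that Algorithm~\ref{algo1} outputs a $(1\pm\epsilon)$-approximation, now driven by Lemma~\ref{lm:approx_finite} in place of Lemma~\ref{lemma: approx}, and then to check that the two error sources peculiar to finite precision --- the extra factor $(1+\eta)^i$ appearing inside $\tau$ in Lemma~\ref{lm:approx_finite}, and the $\ell_{\text{ans}}$-bit rounding of the returned number --- are both absorbed into the $(1\pm\epsilon)$ budget by the stated choices of $Q$, $L$ and $\ell_{\text{ans}}$.

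Concretely, put $j^\star=\min\{j:\hT(n,j)\le C\}$; this is well defined since $\hT(n,s+1)=0\le C$, and when $j^\star\ge 1$ minimality also gives $C<\hT(n,j^\star-1)$ (the degenerate case $j^\star=0$ forces $F(C)=1$ and is immediate). Specialising Lemma~\ref{lm:approx_finite} to $i=n$ gives $\tau(n,Q^{-j^\star})\le\hT(n,j^\star)\le C$ and $C<\hT(n,j^\star-1)\le\tau\bigl(n,(1+\eta)^n Q^{\,n-j^\star+1}\bigr)$, and feeding these into the identity $F(C)=\max\{a:\tau(n,a)\le C\}$ sandwiches $Q^{-j^\star}\le F(C)<(1+\eta)^n Q^{\,n-j^\star+1}$. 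Since the algorithm returns $Q^{-j^\star+1}$ rounded down to $\ell_{\text{ans}}$ bits, the output $\tilde F(C)$ obeys $(1-\delta)Q^{-j^\star+1}\le\tilde F(C)\le Q^{-j^\star+1}$ for a rounding error $\delta\le 2^{-\ell_{\text{ans}}}/Q^{-j^\star+1}$, and dividing the three displays yields
\begin{equation*}
\frac{1-\delta}{(1+\eta)^n Q^n}<\frac{\tilde F(C)}{F(C)}\le Q .
\end{equation*}

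It remains to bound the three factors, which is the computational core. Since $j^\star\le s+1$ and $Q^s\le Q/\q$ (from $s=\lceil\log_Q\tfrac1\q\rceil$), we have $Q^{-j^\star+1}\ge\q/Q$, so with $\ell_{\text{ans}}=s+\log_2(Q/\epsilon)$ the rounding error $\delta$ is far below $\epsilon$ (indeed $s=\Theta((n/\epsilon)\log\tfrac1\q)$ is far larger than $\log_2\tfrac1\q$, so $2^{-\ell_{\text{ans}}}\ll\q\epsilon$). The choice $2^{-\ell_q}\le\epsilon/(4n)$ gives $Q\le1+\epsilon/(4n)$ and hence $Q^n\le e^{\epsilon/4}$. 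Finally, bounding $\eta=\tfrac{Q^{s+1}-Q^{-1}}{Q-1}\,2^{-(L-1)}$ via $Q^{s+1}\le Q^2/\q$, $Q-1=2^{-\ell_q}\ge\epsilon/(8n)$ and $2^s\ge1/\q$, the stated $L\ge\log_2\tfrac{32n^2Q}{\epsilon^2}+s+1$ forces $\eta\le\tfrac{Q\epsilon}{4n}\le\tfrac{\epsilon}{2n}$, so $(1+\eta)^n\le e^{\epsilon/2}$. Combining, $(1+\eta)^nQ^n\le e^{3\epsilon/4}$, and since $(1-\epsilon)e^{3\epsilon/4}<1$ with slack that comfortably absorbs the negligible $\delta$, the ratio falls inside $[1-\epsilon,Q]\subseteq[1-\epsilon,1+\epsilon]$, which is the claim.

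The delicate part is the amplification analysis hidden in the estimate $\tfrac{Q^{s+1}}{Q-1}=O\bigl(nQ^2/(\epsilon\q)\bigr)$: the additive truncation errors $2^{-L}$ in the oracle answers $P_i(m)$ enter the test on Line~\ref{line:cp} multiplied by weights $Q^{j+s+1-m}$ as large as $\Theta(1/\q)$, so they are amplified by roughly $1/\q$, and one must check that the polynomial-size choice $L=\Omega(s)=\Omega\bigl((n/\epsilon)\log\tfrac1\q\bigr)$ bits suffices to keep the amplified error $\eta$ down to $O(\epsilon/n)$ --- which is exactly what makes $(1+\eta)^n$, and therefore the overall approximation, stay within the $(1\pm\epsilon)$ target. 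Everything else is a verbatim rerun of the infinite-precision analysis.
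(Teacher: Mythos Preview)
Your proposal is correct and follows essentially the same route as the paper's proof: invoke Lemma~\ref{lm:approx_finite} at $i=n$ to sandwich $F(C)$ between $Q^{-j^\star}$ and $(1+\eta)^n Q^{\,n-j^\star+1}$, bound the rounded output $\tilde F(C)$ against $Q^{-j^\star+1}$, and then verify the parameter choices force the ratio into $[1-\epsilon,1+\epsilon]$. The paper's own proof compresses the final verification into ``it is not hard to verify,'' whereas you actually carry out the estimates on $Q^n$, $(1+\eta)^n$, and $\delta$; your calculations are sound (in particular the key observation $2^{-s}\ll \q$ because $s=\Theta((n/\epsilon)\log\tfrac1\q)\gg\log_2\tfrac1\q$, which is what keeps both $\eta$ and $\delta$ under control).
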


\begin{proof}
According to Lemma \ref{lm:approx_finite}, we have $\tau(n, Q^{-j^\star}) \leq C \leq \tau(n, Q^{n-j^\star+1} (1+\eta)^{n})$, and hence $F(C) \in [Q^{-j^\star}, Q^{n-j^\star+1} (1+\eta)^{n}]$. On the other hand, the returned $\tF(C)$ by our bit version of the algorithm satisfies $\tF(C) \in (Q^{-j^\star+1}-2^{-\ell_{\text{ans}}+1}, Q^{-j^\star+1}]$. By our choice of $Q$ (or $\ell_q$), $L$ and $\ell_{\text{ans}}$, it is not hard to verify that $\tF(C)/F(C) \in [1-\epsilon, 1+\epsilon]$.
\end{proof}

\if
Now we analyze the bit complexity of the modified algorithm by listing that of each operation as follows:

\vspace{15pt}
\begin{tabular}{lll}
Operation & Element & Bit Complexity \\ \hline
Multiply & $Q^{j-m}$, $P_i$ &  $O(\ell \cdot sn/\epsilon)$ \\
Sum & $Q^{j-m} \cdot P_i$ & $O(s \cdot \ell s n /\epsilon)$ \\
Compare & $\sum_{m=0}^s{Q^{j-m} P_i}$, $1$ & $O(s \cdot \ell s n /\epsilon)$ \\
Add/Sub & $T$, $\calT$ & $O(\log C)$
\end{tabular}
\vspace{15pt}
\fi

The required bit length $L$ (the length of $\widehat P_i(m)$ values) is $\poly(n,\epsilon^{-1},\log \frac{1}{\Delta})$.
So all numbers involved in our algorithms can be encoded in $\poly(n, \epsilon^{-1}, \log C, \log \frac{1}{\q})$ bits,
and the total number of arithmetic operations (additions, multiplications, comparisons) is also $\poly(n, \epsilon^{-1}, \log C, \log \frac{1}{\q})$,
we then conclude the overall bit complexity is $\poly(n, \epsilon^{-1}, \log C, \log \frac{1}{\q})$.


\noindent

\section*{Acknowledgements}
We would like to thank  Paul Tsui and the anonymous reviewer for their detailed comments,
which helped us to improve the presentation significantly.
This work was supported in part by
the National Basic Research Program of China Grant
2011CBA00300, 2011CBA00301, the National Natural Science Foundation of China Grant 61202009,
61033001, 61061130540, 61073174.

\bibliographystyle{abbrv}
\bibliography{countProb}

\end{spacing}
\end{document}